\newcommand{\dfin}[2]{#1\in\mathcal{A}^{#2}}
\newcommand{\hodgeop}{\star}
\newcommand{\Laplace}{\triangle}
\newcommand{\poincare}{Poincar\'{e} }
\theoremstyle{plain}
\newtheorem{proposition}{Proposition}
\newtheorem{theorem}{Theorem}
\newtheorem{definition}{Definition}
\theoremstyle{remark}
\newtheorem{remark}{Remark}
\begin{document}

\title{\poincare Duality in the Quantization of the Norm of Differential Forms} 

\author{Juan Mendez}
\email{juanmenrod58@gmail.com}
\affiliation{Former Professor. Universidad de Las Palmas de Gran Canaria, Canary Island. Spain}

\date{\today}

\begin{abstract}
The more important difference between Riemann and pseudo-Riemann manifolds is the metric signature and its theoretical consequences. The practical application for Physics Theories becomes often impossible due to the signature consequences.
Eg., some of the rich results in Riemann Geometry and Topology become invalid for Physics if they are based on the concept of the positive definite norm; to avoid this problem, the proof machinery must avoid such assumption and must be based in other tools. This paper is a contribution to provide methodologies for Hodge decomposition and \poincare duality based on the concept of linear independence of canonical classes instead of the positive norm. 

As a result, the Hodge and norm decompositions are expressed based on continuous and discrete terms.  When this result is applied to Classical Electromagnetic Theory, in pseudo-Riemann manifolds with minkowskian metric, magnitudes as the field norm and action have one discrete sum of terms. This result, as a quantization of the norm and action is a property of the Topology, in special of the Cohomology classes, that are sources of the field as well as the generators of action quantum. 
\end{abstract}

\pacs{02.40.-k; 02.40.Vh; 03.50.De; 14.80.Hv}

\keywords{Pseudo-Riemann manifolds; Non-simple connected canifolds;  Cohomology classes; Hodge decomposition; \poincare duality; Electromagnetic duality; Electromagnetic action.}

\maketitle

\section{Introduction}\label{sec:introduction}
The Hodge decomposition of any $p$-form $\phi$  in  Riemann manifolds allows a canonical decomposition as\cite{Flanders:2012,Morita:2001}:
\begin{equation}\label{eq:hodgedecomposition1}
\phi = d\alpha + \delta\beta + \phi_{h}
\end{equation}

where $d$ and $\delta$ are the exterior derivative and coderivative\cite{Gockeler:1989} respectively, $\alpha$ is a $(p-1)$-form, $\beta$ is a $(p+1)$-form and $\phi_{h}$ is an harmonic form characteristic of the cohomology classes. 
The decomposition establishes  that there is solution for $\alpha$, $\beta$ and $\phi_{h}$. 

Some proofs of the Hodge decomposition, eg. Morita\cite{Morita:2001}
and Jost~\cite{Jost:2011}, are based on that in Riemann manifolds the norm of differential forms is positive definite. This property is central in the theory of harmonic forms, but this concept can be different between Riemann and pseudo-Riemann manifolds. 

This paper provides an alternative method to study the formal properties of the Hodge Decomposition. Our approach is based on the linear independence of canonical differential forms instead of being based on the use of the positive norm. In this process, \poincare duality\cite{Hatcher:2002,Bott:1982} plays a remarkable role, mainly in the decomposition of the norm including a sum of discrete terms based on the integral of cohomology representative forms. 
The norm value becomes the sum of some discrete values, as a quantization, where each cohomology class contributes with one discrete value. 

\poincare and Hodge duality are some type of \emph{complement} relationship to complete the unit or volume form; that requires that all the forms are quadratically integrable. However, beyond that resemblance, both dualities are applied to different theoretical objects. 

This property of the quantized norm of differential forms is applied to physical theories. In the case of Electromagnetism in no simple connected manifolds, the Electromagnetic Action becomes quantized such as each cohomology class provides one discrete action value or quantum.    

Although the Wormholes approaches\cite{Misner:1957} are also based in Topology, the approach of this paper uses the concepts abstractly, neither assumption is included about how the space-time is nor what it is its metric. However, the described picture has in common with Wormholes theories the concept of field without needing elementary particles\cite{Misner:1957}. That is the Field as a property of the space-time. 

The plan of this paper is as follows, Section \ref{sec:framework} presents a summary of theoretical materials used as the cohomology classes and the \poincare duality. Section \ref{sec:Hodge} presents the Hodge decomposition from the viewpoint of linear independent classes and the \poincare duality. Section \ref{sec:norm} presents the decomposition of the norm that includes some discrete, or quantized, terms associated with each cohomology class. The special case of even-dimensional manifolds is presented in Section \ref{sec:even}. An application for generalized dual Electromagnetism is presented in Section  \ref{sec:aplicacion}. Finally, a Conclusions Section is included.

\section{Theoretical Framework}\label{sec:framework}
Let $(M,g)$ be a $n$-dimensional compact, differentiable, oriented and connected manifold $M$ with a metric $g$ locally reducible to a diagonal case:
\begin{equation}
\eta=\textrm{diag}(\;\underbrace{1,\ldots,1}_{r},\underbrace{-1,\ldots,-1}_{s}\;)
\end{equation}
where $r$ and $s=n-r$ are the number of positive and negative ones respectively. If both $r$ and $s$ are non null, it  is named a pseudo-Riemann, or semi-Riemann, manifold with indefinite metric, while  Riemann manifold, $s=0$, is a case that has positive definite metric.

Although there are many similarities between Riemann and pseudo-Riemann manifolds\cite
{Bishop:1968}, there are some differences in the properties that are consequences of the metric signature;  in differential forms mainly in the computation of the Hodge dual and its subsequent uses. Many properties depend on this duality. Therefore,  many results in the extensive bibliography in Riemann manifolds must be carefully used when applied to the pseudo-Riemann case.     

Let $\mathcal{A}^{p}(M,\mathbb{R})$ be the set of $p$-forms on $M$ with values in $\mathbb{R}$. 
The Hodge duality gets an isomorphism between $\mathcal{A}^{p}$ and $\mathcal{A}^{n-p}$. The Hodge star operator, $\hodgeop$, defines a linear map $\hodgeop : \mathcal{A}^{p} \rightarrow \mathcal{A}^{n-p}$. The Hodge duality is closely related to the unit $n$-form $\omega$, such as is $\alpha$ is a $p$-form and $\hodgeop\alpha$ is its dual. It is verified:
\begin{equation}
\alpha\wedge\hodgeop\alpha= <\alpha,\alpha>\omega \qquad \int_{M}\omega=1
\end{equation}
where $<\alpha,\alpha>$ is their inner product or square of their norm dependent on the metric signature, eg. for a vector $V$, it is: $<V,V>=V_{a}V_{b}g^{ab}$. The Hodge dual of the unit $n$-form $\omega$ is the scalar or $0$-form $1$; the relationships between both are: $\hodgeop 1=\omega$ and $\hodgeop\omega = (-1)^{s}$. The double Hodge duality for $\dfin{\phi}{p}$\cite{Gockeler:1989} verifies:
\begin{equation}
\hodgeop\hodgeop  \phi = (-1)^{D(p)}\phi \qquad D(p)=p(n-p)+s
\end{equation}
It is verified that: $D(p)=D(n-p)$. The exterior derivative, $d$, defines a linear map $d:\mathcal{A}^{p}\rightarrow\mathcal{A}^{p+1}$. It  allows the definition of the coderivative  $\delta:\mathcal{A}^{p}\rightarrow\mathcal{A}^{p-1}$ as\cite{Gockeler:1989}:
\begin{equation}
\delta\phi = (-1)^{C(p)}\hodgeop  d\hodgeop  \phi \qquad C(p)=np+n+1+s
\end{equation}
The Hodge duality allows to define a bilinear integral in the manifold $M$ of two $p$-forms $\alpha$ and $\beta$ as:
\begin{equation}\label{eq:norm}
(\alpha,\beta) = \int_{M} \alpha\wedge\hodgeop\beta
\end{equation}
The bilinear integral allows the definition of the norm of a differential form  as: $(\alpha,\alpha)$. In Riemann manifolds, it verifies: $(\alpha,\alpha)\geq 0$. If it is verified that $(\alpha,\alpha)= 0$, then must be $\alpha=0$; so it is positive definite for all non null forms\cite{Jost:2011}.

Derivative and coderivative operators have  some similar properties eg.  derivative verifies that: $dd=0$ and coderivative verifies: $\delta\delta = 0$.  It is verified that $\alpha\wedge\hodgeop\beta=\beta\wedge\hodgeop\alpha$; from this the bilinear integral in Equation (\ref{eq:norm}) has some properties~\cite{Morita:2001}
common for Riemann and pseudo Riemann manifolds as: $(A,B)= (B,A)$. Also, other properties are: $(dC,A)=(C,\delta A)$, where $A$ and $B$ are $p$-form and $C$ is a $(p-1)$-form, also the linear property: $(A,B+C)=(A,B)+(A,C)$. 

\subsection{Strong harmonic forms}\label{subsec:harmonic}
Let $\Laplace$ be a  second order differential operator --called Laplace-Beltrami--  that map $\Laplace : \mathcal{A}^{p}\rightarrow \mathcal{A}^{p}$. It is defined as: $ \Laplace = \delta d + d \delta$. A $p$-form $\phi$ is named harmonic if it verifies: $\Laplace \phi=0$. In Riemann manifolds this implies that $\phi$ is closed, $d\phi=0$, and also its dual is closed, that is $d\hodgeop\phi=0$. That last condition can be called co-closed or dual closed (its dual is closed) that implies that: $\delta\phi=0$. If both conditions are verified, it is an harmonic form, also called by Bott and Tu\cite
{Bott:1982} global closed.

However, in pseudo-Riemann manifolds, there are some differences concerning these concepts. In this paper, a differential form $\phi$ verifying $d\phi=0$ and also $\delta\phi=0$ is named strong harmonic to indicate that it is included in the harmonic class but the reciprocal is not true in pseudo-Riemann manifolds.

With a positive definite metric, the equation $\Laplace\phi=0$ defines an elliptic second-order differential equation. However, with an indefinite metric, it defines a hyperbolic second order differential equation, whose general solution is a wave. 

Instead of a theoretical proof, real-world evidence is more illustrative; this is the existence of Electromagnetic waves in the vacuum. 
These can be represented by a harmonic $1$-form, $\Laplace A=0$, also a gauge condition can be imposed $\delta A=0$, but the Electromagnetic Field, $F=dA$,  is no null. Therefore, Electromagnetic waves in the vacuum are  physical examples of harmonic but no strong harmonic forms. The reason is rather simple because the space-time  --the theoretical framework of physical phenomena-- is modeled as a pseudo-Riemann manifold with Minkowski-like metric, instead of a \emph{pure} Riemannian one. The conclusion is that not all results and methodologies coming from Riemannian manifold literature can be directly transferred to the pseudo-Riemannian case, mainly if harmonic forms are involved. 
    
In this paper the existence of solution for the equation: $\Laplace  \theta = \varphi$ is supposed. This has been deeply analysed in hyperbolic differential equations\cite{Friedlander:1975}. The solution can be represented by: $\theta = G\circ \varphi$, where the $G=\Laplace^{-1}$ operator is the Green map $G: \mathcal{A}^{p}\rightarrow\mathcal{A}^{p}$. 

\subsection{ Cohomology Classes}\label{subsec:cohomology}
Let $Z^{p}(M)$ be  the set of all closed $p$-forms in $M$ and $B^{p}(M)$ the set of all the exact $p$-forms. The quotient set $H^{p}(M)= Z^{p}(M)\setminus B^{p}(M)$ contains all the closed but non-exact $p$-forms in $M$. The dimensionality of the cohomology $H^{p}(M)$, which is finite dimensional, is the $p$-dimensional Betti number 

A $p$-rank cycle $z$ on $M$ is a closed $p$-dimensional sub-manifold on $M$; that is without boundary, $\partial z =\emptyset$. A simply connected manifolds $M$ has not non null cycles,  $\textrm{dim} H^{p}(M)=0$, then $Z^{p}(M)$ and $B^{p}(M)$ are equivalent and all the closed forms are exact ones, this is the \poincare Lemma\cite{Gockeler:1989}. But in no simple connected manifolds not all closed forms ($d\omega=0$) are exacts ($\omega \neq d\phi$). This difference is the theoretical arena  of the cohomology classes.  


\subsubsection{The de Rham Theorem}\label{subsubsec:FirstRham}
According to the  de Rham Theorem\cite{Flanders:2012}
the necessary and sufficient condition for a form $\phi \in \mathcal{A}^{p}(M)$ be exact is that $\int_{z}\phi=0$ for any $p$-dimensional cycle $z$ in $M$. Let $[\phi]$ be a representative form of the cohomology class to which $\phi$ belongs, it is verified that if $\phi = [\phi]+ d\varphi$ them both $\phi$ and $[\phi]$ are cohomologous forms. Any cohomologous form can be chosen as the representative of the class. The integral in a cycle $z$ of any cohomologous form is a characteristic of the class:
\begin{equation}
\int_{z} \phi = \int_{z}[\phi] + \int_{z}d\varphi = \int_{z}[\phi]
\end{equation}
because by  applying the Stokes theorem in the closed submanifold $z$, it is verified that: $\int_{z}d\varphi=\int_{\partial z}\varphi=0$.

The Hodge Theorem, as shown by Jost\cite
{Jost:2011} and Morita~\cite
{Morita:2001}, proof that every cohomology class $H^{p}(M)$ contains precisely one harmonic form. Hence, as asserted by Morita\cite
{Morita:2001}, let $z^{(p)}_{1},\ldots,z^{(p)}_{\beta_{p}}$ be a set of linearly independent $p$-dimensional cycles. Each cycle can be characterized by one homology class; therefore one strong harmonic form $\gamma^{(p)}_{a}\in\mathcal{A}^{p}(M,\mathbb{R})$ is chose as the representative of the $a^{\textrm{th}}$ class in $H^{p}(M)$. 

\begin{definition}
Let $\gamma^{(p)}(M)=\{\gamma^{(p)}_{1},\ldots,\gamma^{(p)}_{\beta_p}\}$ be a set of of linear independent $p$-forms representative of the classes in  $H^{p}(M)$.  To each cycle corresponds one representative form and vice versa: $z^{(p)}_{a} \leftrightarrow \gamma^{(p)}_{a}$.
\end{definition}

According with the de Rham Theorem\cite{Flanders:2012,Morita:2001}, there is a set of real numbers: $w_{1},\ldots,w_{r}$ for which there is an unique closed $p$-form $\phi$ which verifies: $\int_{z^{p}_{a}}\phi = w_{a}$. This form is undefined only in an arbitrary exact form. The following closed $p$-form $\phi$ must verify the assertion of the de Rham Theorem: 
\begin{equation}\label{nonormal}
\phi = \sum_{a=1}^{\beta_{p}}w_{a}\gamma^{(p)}_{a}
\end{equation}
The integral of a cohomologous form in a cycle must be non null, i.e. $\int_{z}\gamma \ne 0$, if and only if $\gamma$ is the cohomologous correspond of $z$. Linear independence of representative forms means that if: $\sum_{a=1}^{\beta_{p}}w_{a}\gamma^{(p)}_{a} = 0$, then it must be:  $w_{a}=0$. The normalizations for the set of representative forms is:
\begin{equation}\label{eq:normalization1}
\int_{z^{p}_{b}} \gamma^{(p)}_{a} = \delta_{ab}  \qquad \int_{M} \gamma^{(p)}_{a} \wedge \hodgeop \gamma^{(p)}_{a} = \lambda^{(p)}_{aa} 
\end{equation}
another significant integral is the extension of the norm: $(\gamma^{(p)}_{a},\gamma^{(p)}_{b})= \int_{M} \gamma^{(p)}_{a}\wedge\hodgeop\gamma^{(p)}_{b}=\lambda^{(p)}_{ab}$.  The set $\gamma(M)$ represents the global properties of the manifold topology, being an intrinsic property that characterize it. It is as basic as the metric $g$ that represents the differential geometry or local properties of the manifold. Both the metric and the characteristic forms $(M;g,\gamma)$ must be used to explicit the role of both, the local/differential and global/topological properties of the manifold.

\subsection{The \poincare duality}\label{subsec:PoincareDuality}

The Hodge duality does a map $\mathcal{A}^{p}\rightarrow \mathcal{A}^{n-p}$ for every $\dfin{\phi}{p}$ while \poincare duality\cite{Hatcher:2002,Bott:1982}  does a map for the chomology classes as: $H^{p}\rightarrow H^{n-p}$. Simplified, \poincare duality defines a map for each pair of cohomologous forms in $H^{p}(M)$ and $H^{n-p}(M) $ into $\mathbb{R}$, that is: 
\begin{equation}
H^{p}(M) \times H^{n-p}(M) \rightarrow \mathbb{R} 
\end{equation}

From this Equation, the \poincare Duality Theorem~\cite
{Morita:2001} establishes that for any non null cohomology class $[\alpha]\in H^{p}(M)$ exists a no null cohomology class $[\beta]\in H^{n-p}(M)$ such as $\int_{M} \alpha\wedge\beta \neq 0$. This duality is an isomorphism, $H^{p}\cong H^{n-p}$, therefore exist one and only one class $[\beta]\in H^{n-p}(M)$ for each $[\alpha]\in H^{p}(M)$ and the reciprocal. Due to the isomorphism between the sets $\boldsymbol{\gamma}^{(p)}$ and $\boldsymbol{\gamma}^{(n-p)}$, their dimensionalities or Betti numbers have the same value: $\beta_{p} = \beta_{n-p}$. 

\begin{definition}\label{defmatrixE}
The matrix: $\mathbf{E}^{(p)}=(\varepsilon^{(p)}_{ab}) \in M(\beta_{p}, \mathbb{R})$, defined as follows, has one and only one non null element for each row and column, therefore it is no singular.
\begin{equation}\label{eq:firstdefinition}
\varepsilon^{(p)}_{ab}=\int_{M}\gamma^{(p)}_{a}\wedge\gamma^{(n-p)}_{b} 
\end{equation}
\end{definition}

An alternative definition of \poincare duality is presented by Bott and Tu\cite{Bott:1982}; it focuses in the relationship between $z$ cycle and representative form,  such as if $z$ is a closed oriented sub-manifold of dimension $p$ (a $p$-cycle) in an oriented manifold $M$ of dimension $n$, the \poincare dual of $z$ is the cohomology class of closed $(n-p)$-form $\eta_{z}\in H^{n-p}(M)$ characterized by the following property.
\begin{equation}\label{eq:bottdefinition}
\int_{z} \phi = \int_{M} \phi\wedge\eta_{z}
\end{equation} 

Using the previous definitions and theoretical materials, the Bott and Tu definition can be rewritten based on $\gamma^{(p)}$ set. Based on Equation \eqref{eq:bottdefinition}, it is verified that:  
\begin{equation}\label{eq:bottdefinition2}
\int_{z^{(p)}_{a}} \gamma^{(p)}_{b} = \int_{M} \gamma^{(p)}_{b}\wedge\eta^{(n-p)}_{a} =\delta_{ab}
\end{equation} 

where $\eta^{(n-p)}_{a}$ is the \poincare dual of $z^{(p)}_{a}$. However, $\eta^{(n-p)}_{a}\in H^{n-p}(M)$ is not necessarily one of the classes $\gamma^{(n-p)}_{a}$ previously presented. A linear combination can be used: 
\begin{equation}
\eta^{(n-p)}_{a} = \sum_{c=1}^{\beta_{n-p}} \eta_{ac} \gamma^{(n-p)}_{c} \qquad \eta_{ac}  \in \mathbb{R}
\end{equation} 

This implies that:
\begin{equation}\label{eq:normalfactor}
\sum_{c=1}^{\beta_{n-p}} \eta_{ac} \int_{M} \gamma^{(p)}_{b}\wedge\gamma^{(n-p)}_{c} = \delta_{ab}
\end{equation} 

It can be rewrite based on the matrix $\varepsilon^{(p)}_{ab}$ as: $\sum_{c=1}^{\beta_{n-p}} \eta_{ac} \varepsilon^{(p)}_{bc}=\delta_{ab}$. If one and only one of the row and column values of $\varepsilon^{(p)}_{ab}$ is non null, then one and only only one of the values of $\eta_{ab}$ can be chose as non null. If $\varepsilon^{(p)}_{ab}\neq 0$, then $\eta_{ab} = 1/\varepsilon^{(p)}_{ab}$. That means that the difference between the dual of $z^{(p)}_{a}$, in the Bott-Tu definition way, and the dual of $\gamma^{(p)}_{a}$ is a multiplicative constant.

To simplify the expressions dealing \poincare duality, we will use the symbol $P$ in two ways: as operator: $P: H^{k}(M)\rightarrow H^{n-k}(M)$; also as an index function/modifier, such as we will use $b=P(a)$ to express that the index $b$ corresponds to the \poincare dual of index $a$. This means that:
\begin{equation}\label{eq: poincareoperador}
P\gamma^{(p)}_{a} = \gamma^{(n-p)}_{P(a)}
\end{equation}

If $a$ is a index in $\gamma^{(p)}$ set, then $P(a)$ is a index in set $\gamma^{(n-p)}$ and vice versa. Due to the isomorphism, must be that $P(P(a))=a$. Also, as is shown in Figure \ref{fig:classes}, it is verified that: 
\begin{equation}\label{eq: poincareoperadord}
PP\gamma^{(p)}_{a} = P\gamma^{(n-p)}_{P(a)} = \gamma^{(p)}_{P(P(a))} = \gamma^{(p)}_{a}
\end{equation}

By using this index definition, the non null value in Equation (\ref{eq:firstdefinition}) is that involves the $a$ and $P(a)$ indexes: 
\begin{equation}
\varepsilon^{(p)}_{a,P(a)}=\int_{M}\gamma^{(p)}_{a}\wedge\gamma^{(n-p)}_{P(a)} 
\end{equation}

Bott and Tu\cite{Bott:1982} definition, in Equation (\ref{eq:bottdefinition}), becomes:
\begin{equation}\label{eq:bottdefinition3}
\int_{z^{(p)}_{a}} \phi = \eta\int_{M} \phi\wedge\gamma^{(n-p)}_{P(a)} \qquad 
\eta=\left(\varepsilon^{(p)}_{a,P(a)}\right)^{-1}
\end{equation} 

\begin{figure}[t]
	\centering
	\includegraphics[width=0.3\textwidth]{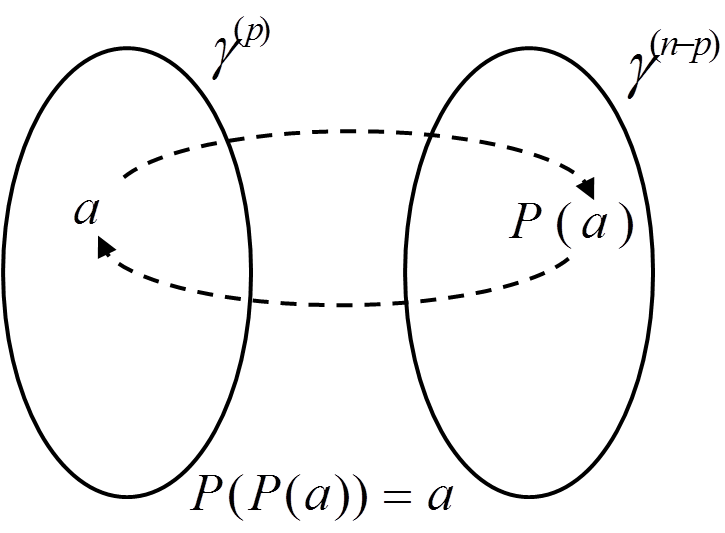}
	\caption{Indexes correspondence of characteristic forms in the isomorphism between $H^{(p)}$ and $H^{(n-p)}$.} \label{fig:classes}	
\end{figure}

\section{The \poincare Duality in the Hodge Decomposition}\label{sec:Hodge}

\begin{definition}
	Let $Z^{p}(M)$  be the class of closed form $p$-forms in $M$, $\hodgeop Z^{n-p}(M)$ the class of dual closed $p$-forms and $W^{p}(M)$ the class of strong harmonic forms, being: $W^{p}(M)= Z^{p}(M) \cap\hodgeop Z^{n-p}(M)$
\end{definition}

\begin{proposition}\label{prop:linearindependence}
	The three classes $Z^{p}\setminus W^{p}$, $\hodgeop Z^{n-p}\setminus W^{p}$ and $W^{p}$ are linearly independent.
\end{proposition}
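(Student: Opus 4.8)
The plan is to recognize the statement as a direct instance of the linear-independence Proposition of the meta-theory of Section~\ref{sec:metateoria}, taking $d$ and $\delta$ as the detecting operators. The three classes supply the homogeneous objects: let $\alpha$ be a representative of $Z^{p}\setminus W^{p}$, let $\beta$ be a representative of $\hodgeop Z^{n-p}\setminus W^{p}$, and let $\gamma$ be a representative of $W^{p}$, necessarily non-zero since the zero form lies in $W^{p}$ but is not taken as a representative. The first step is to tabulate how $d$ and $\delta$ act on each, using only the set-difference definitions together with $dd=0$ and $\delta\delta=0$: since $\alpha$ is closed, $d\alpha=0$, and since $\alpha\notin W^{p}=Z^{p}\cap\hodgeop Z^{n-p}$ one must have $\delta\alpha\neq 0$; symmetrically $\delta\beta=0$ while $d\beta\neq 0$; and $d\gamma=\delta\gamma=0$ because $\gamma$ is strong harmonic.

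I would then assume a vanishing combination $a\alpha+b\beta+c\gamma=0$ with $a,b,c\in\mathbb{R}$ and apply the operators in turn, exactly as in the proof of the meta-theory Proposition. Applying $d$ annihilates the $\alpha$ and $\gamma$ terms and leaves $b\,d\beta=0$, and since $d\beta$ is a non-zero $(p+1)$-form this forces $b=0$. Applying $\delta$ then annihilates the $\beta$ and $\gamma$ terms and leaves $a\,\delta\alpha=0$, and since $\delta\alpha$ is a non-zero $(p-1)$-form this forces $a=0$. What remains is $c\gamma=0$ with $\gamma\neq 0$, hence $c=0$, which is the asserted linear independence of the three classes.

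The argument is essentially mechanical, so the only point deserving care --- and the natural place for a sceptical reader to push back --- is the strictness of the two inequalities $\delta\alpha\neq 0$ and $d\beta\neq 0$. These are not extra hypotheses but precisely the content of membership in the set differences: a closed form whose coderivative also vanished would be closed and co-closed, i.e.\ strong harmonic, contradicting $\alpha\notin W^{p}$, and symmetrically for $\beta$. It is worth stressing in the write-up that the proof never invokes $(\alpha,\alpha)\geq 0$ nor the implication $(\alpha,\alpha)=0\Rightarrow\alpha=0$, so the proposition holds verbatim on pseudo-Riemann manifolds; the structural role ordinarily played by the positive-definite inner product is here carried entirely by the nilpotency relations $dd=\delta\delta=0$ together with the trivial linear-algebra fact that a non-zero form cannot be scaled to zero. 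A finer splitting of $W^{p}$ into a cohomologically non-trivial part and a constant residue, which brings in the cycle integrals $\int_{z^{(p)}_{a}}$, is a separate matter treated later and is not required here.
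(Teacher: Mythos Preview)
Your proof is correct and follows essentially the same approach as the paper: pick representatives with the stated $d$/$\delta$ properties, apply $d$ to the vanishing combination to isolate the $\hodgeop Z^{n-p}\setminus W^{p}$ coefficient, apply $\delta$ to isolate the $Z^{p}\setminus W^{p}$ coefficient, and then conclude the remaining $W^{p}$ coefficient vanishes. Your added remarks on why $\delta\alpha\neq 0$ and $d\beta\neq 0$ follow from the set-difference definitions, and on the irrelevance of positive-definiteness, are in the spirit of the paper and would fit well as commentary.
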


\begin{proof}
	If $\phi\in Z^{p}\setminus W^{p}$ ($d\phi=0, \delta\phi\neq 0$), $\psi\in \hodgeop Z^{n-p}\setminus W^{p}$ ($d\psi \neq 0, \delta\psi = 0$) and  $\varphi\in W^{p}$ ($d\varphi=0, \delta\varphi = 0$) are non null $p$-forms.
	It must be proved that in the equation $A\phi + B\psi + C\varphi=0$, with constant $\{A, B,C\}\in \mathbb{R}$, the only solution must be: $A=B=C=0$. By using the derivation: $d(A\phi + B\psi +C\varphi)= Bd\psi=0$, therefore it must be $B=0$. By using the coderivative: $\delta(A\phi + B\psi + C\varphi)= A\delta\phi=0$, therefore it must be $A=0$ which also implies that $C=0$.
\end{proof}

\begin{definition}
	Let $\mathcal{H}^{p} \subseteq W^{p}$ be a $p$-form class based on linear combinations of the $\gamma^{p}$ set of the $p$-forms representative of the cohomology classes of $H^{p}(M)$ and $w_{a}\in\mathbb{R}$: 
	\begin{equation}
	\sum_{a=1}^{\beta_{p}}w_{a}\gamma^{(p)}_{a} \in \mathcal{H}^{p} \subseteq W^{p} 
	\end{equation}
\end{definition}

\begin{definition}
	Let $d\mathcal{A}^{p-1}\equiv B^{p}(M)$ be the class of exact $p$-forms, but not dual exact ones;  if $\sigma\in d\mathcal{A}^{p-1} \subset \mathcal{A}^{p}$, it is verified that:  $d\sigma=0$ and $\delta\sigma\neq 0$. It can be expressed as: $\sigma = d\phi$ where $\phi\in \mathcal{A}^{p-1}$.
\end{definition}

\begin{definition}
	Let $\delta\mathcal{A}^{p+1}$ be the class of dual exact $p$-forms, but not exact ones, 
	that is $\sigma = \delta\phi$ where $\sigma\in \delta\mathcal{A}^{p+1} \subset \mathcal{A}^{p}$ and $\phi\in \mathcal{A}^{p+1}$. It is verified that:  $d\sigma\neq 0$ and $\delta\sigma = 0$.
\end{definition}

They verify: $d\mathcal{A}^{p-1} \subseteq Z^{p}/W^{p}$, $\delta\mathcal{A}^{p+1} \subseteq \hodgeop Z^{n-p}/W^{p}$ and $\mathcal{H}^{p} \subseteq W^{p}$, and they are linearly independent according the Proposition \ref{prop:linearindependence}.

\begin{proposition}\label{prop:nodeltaint}
	The following integrals are null.
	\begin{equation}
	\int_{z^{(p)}_{a}} d\alpha = 0 \qquad \int_{z^{(p)}_{a}} \delta\beta = 0
	\end{equation}
\end{proposition}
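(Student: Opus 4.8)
The plan is to establish both vanishing claims using the same two tools: the Stokes theorem applied to cycles (which are closed submanifolds, $\partial z = \emptyset$) together with the earlier observation from the de Rham discussion that $\int_z d\varphi = 0$. Concretely, for the first integral $\int_{z^{(p)}_a} d\alpha$, where $\alpha \in \mathcal{A}^{p-1}$, I would simply invoke Stokes: $\int_{z^{(p)}_a} d\alpha = \int_{\partial z^{(p)}_a} \alpha = \int_{\emptyset} \alpha = 0$, since $z^{(p)}_a$ is a $p$-cycle with empty boundary. This is exactly the argument already used right after Equation~(9) in the excerpt, so it is immediate.

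The second integral $\int_{z^{(p)}_a} \delta\beta$, with $\beta \in \mathcal{A}^{p+1}$, is the one requiring a little more care, since $\delta\beta$ is not in general an exact form. The plan is to rewrite $\delta\beta$ using its definition in terms of the Hodge star and exterior derivative, $\delta\beta = (-1)^{C(p+1)}\hodgeop d\hodgeop\beta$, and then exploit \poincare duality to convert the cycle integral into a wedge-product integral over $M$. Using Equation~\eqref{eq:bottdefinition3}, one has $\int_{z^{(p)}_a}\delta\beta = \eta\int_M \delta\beta \wedge \gamma^{(n-p)}_{P(a)}$ with $\eta = (\varepsilon^{(p)}_{a,P(a)})^{-1}$. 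Then I would use the adjunction property $(\delta\beta, \gamma) = (\beta, d\gamma)$ stated in Section~\ref{sec:framework} — more precisely, since $\int_M \delta\beta \wedge \hodgeop\theta = (\delta\beta,\theta) = (\beta, d\theta)$ for a suitable $\theta$ — and the fact that $\gamma^{(n-p)}_{P(a)}$ is a strong harmonic representative, hence closed: $d\gamma^{(n-p)}_{P(a)} = 0$. Writing $\gamma^{(n-p)}_{P(a)} = \hodgeop\theta$ for $\theta = \hodgeop^{-1}\gamma^{(n-p)}_{P(a)} \in \mathcal{A}^{p}$ (the Hodge star being invertible up to sign by the double-duality formula), one gets $\int_M \delta\beta \wedge \gamma^{(n-p)}_{P(a)} = (\beta, d\theta)$ up to sign, and $d\theta$ relates to $d\hodgeop^{-1}\gamma^{(n-p)}_{P(a)}$, which vanishes because $\delta\gamma^{(n-p)}_{P(a)} = 0$ (dual-closedness of the strong harmonic form). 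Hence the whole integral is zero.

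Alternatively, and perhaps more cleanly, I would argue directly: $\int_M \delta\beta \wedge \gamma^{(n-p)}_{P(a)}$, by the symmetry and adjunction properties of the bilinear pairing, equals $\pm(\delta\beta, \hodgeop^{-1}\gamma^{(n-p)}_{P(a)})$, which by $(\delta C, A) = (C, dA)$-type relations equals $\pm(\beta, d\hodgeop^{-1}\gamma^{(n-p)}_{P(a)}) = \pm(\beta, \pm\hodgeop\delta\gamma^{(n-p)}_{P(a)}) = 0$ since $\delta\gamma^{(n-p)}_{P(a)} = 0$. The sign bookkeeping via $C(p)$, $D(p)$ and $\hodgeop\hodgeop = (-1)^{D(p)}$ is purely mechanical and does not affect the conclusion that the integral vanishes.

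The main obstacle I anticipate is not conceptual but notational: making the chain $\int_{z}\delta\beta \to \int_M \delta\beta \wedge \gamma \to (\delta\beta, \hodgeop^{-1}\gamma) \to (\beta, d\hodgeop^{-1}\gamma) \to 0$ fully rigorous requires carefully tracking which form lives in which degree and invoking the adjunction $(dC,A) = (C,\delta A)$ in the correct direction with the right sign. A subtler point worth a remark is that $\delta\beta$ need not be cohomologically trivial, so the first (Stokes) argument does \emph{not} apply to it — the vanishing genuinely relies on \poincare duality plus the dual-closedness of the harmonic representatives $\gamma^{(n-p)}_{P(a)}$, not merely on $\delta\beta$ being "like" an exact form. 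I would state this explicitly to avoid the reader conflating the two cases.
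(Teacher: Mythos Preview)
Your proposal is correct and follows essentially the same route as the paper: Stokes for the first integral, and for the second, \poincare duality via Equation~\eqref{eq:bottdefinition3} followed by moving the (co)derivative onto the strong-harmonic representative $\gamma^{(n-p)}_{P(a)}$, where it vanishes. The paper compresses the second step into a single line $\int_{M}\delta\beta\wedge\gamma^{(n-p)}_{P(a)} = \int_{M}\beta\wedge d\gamma^{(n-p)}_{P(a)}=0$ (and symmetrically for $d\alpha$), whereas you route it explicitly through the bilinear pairing and $\hodgeop^{-1}$; your version is more careful about degrees and signs, but the underlying mechanism---adjunction plus $d\gamma=\delta\gamma=0$---is identical.
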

\begin{proof}
	The first is null because the cycles $z$ are closed sub-manifolds of $M$, therefore $\partial z=\emptyset$; based on the Stokes Theorem\cite{Gockeler:1989} it must be : $\int_{z^{(p)}_{a}} d\alpha = \int_{\partial z^{(p)}_{a}} \alpha = 0$. However, both first and second integrals are null based on the \poincare duality in Equation \eqref{eq:bottdefinition3}:
	\begin{equation}
	\int_{z^{(p)}_{a}} d\alpha = \eta \int_{M} d\alpha \wedge \gamma^{(n-p)}_{P(a)} = \eta \int_{M} \alpha \wedge \delta\gamma^{(n-p)}_{P(a)}=0
	\end{equation}
	
	\begin{equation}
	\int_{z^{(p)}_{a}} \delta\beta = \eta \int_{M} \delta\beta \wedge \gamma^{(n-p)}_{P(a)} = \eta \int_{M} \beta \wedge d\gamma^{(n-p)}_{P(a)}=0
	\end{equation}
\end{proof}

\begin{proposition}\label{prop:objects}
	The objects $p$-forms $\{d\alpha,\delta\beta, u_{a}\gamma^{(p)}_{a}\}$  are linearly independent on the operators $\{d,\delta, \int_{z^{(p)}_{a}}\}$. The only solution for:
	\begin{equation}\label{eq:linde}
	Ad\alpha +B\delta\beta+\sum C_{a}\gamma^{(p)}_{a}=0 
	\end{equation}
	is: $A=B=C_{a}=0$
\end{proposition}

\begin{proof}
	The objects $d\alpha \in d\mathcal{A}^{p-1}$, $\delta\beta \in \delta\mathcal{A}^{p+1}$ and $\gamma^{(p)}_{a}\in W^{p}$  are in linearly independent classes; this requires that expressions as Equation \eqref{eq:linde} are no possible with non null coefficients $A$, $B$ and $C_{a}$. 
\end{proof}

\begin{theorem}[Hodge decomposition]\label{th:simpledecomposicion1}
	If  $\phi$ is a $p$-form, 
	then there is solution for each one of the $\beta_p + 2$ linearly independent terms in the following decomposition:
	\begin{equation}\label{eq:i1}
	\phi = 
	d\alpha + \delta\beta +\sum_{a=1}^{\beta_{p}} u_{a}\gamma^{(p)}_{a} 
	\end{equation}
	where $\alpha$ is a $(p-1)$-form verifying $\delta\alpha=0$ and $\beta$ is a $(p+1)$-form verifying $d\beta=0$.
\end{theorem}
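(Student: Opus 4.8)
The plan is to present the statement as the concrete instance of the completeness proposition of Section~\ref{sec:metateoria} obtained by taking the operator set $\{d,\delta,\int_{z^{(p)}_{1}},\dots,\int_{z^{(p)}_{\beta_{p}}}\}$ and the object set $\{d\alpha,\delta\beta,\gamma^{(p)}_{1},\dots,\gamma^{(p)}_{\beta_{p}}\}$. Linear independence of the $\beta_{p}+2$ terms is already in hand: Proposition~\ref{prop:linearindependence} together with the proposition immediately preceding this theorem places $d\alpha\in d\mathcal{A}^{p-1}$, $\delta\beta\in\delta\mathcal{A}^{p+1}$ and $\gamma^{(p)}_{a}\in W^{p}$ in linearly independent classes, while the $\gamma^{(p)}_{a}$ are independent inside $W^{p}$ by the normalization \eqref{eq:normalization1}. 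So the real work is \emph{completeness}: producing $\alpha$, $\beta$ and the coefficients $u_{a}$, and checking that the leftover $\phi_{0}$ obeys the residue conditions \eqref{eq:lack}.

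First I would extract the periods. Applying $\int_{z^{(p)}_{a}}$ to the ansatz and using Proposition~\ref{prop:nodeltaint} (so that $\int_{z^{(p)}_{a}}d\alpha=\int_{z^{(p)}_{a}}\delta\beta=0$), the normalization $\int_{z^{(p)}_{b}}\gamma^{(p)}_{a}=\delta_{ab}$, and $\int_{z^{(p)}_{a}}\phi_{0}=0$, one reads off $u_{a}=\int_{z^{(p)}_{a}}\phi$, the ``reversible result'' of the meta-theory being here the identity. Next I would build the exact and dual-exact parts with the Green map $G=\Laplace^{-1}$ postulated in Section~\ref{subsec:harmonic}: set $\alpha=\delta G\phi$ and $\beta=dG\phi$, which forces $\delta\alpha=\delta^{2}G\phi=0$ and $d\beta=d^{2}G\phi=0$ as demanded, while $d\alpha+\delta\beta=(d\delta+\delta d)G\phi=\Laplace G\phi$. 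Then I would simply \emph{define} $\phi_{0}=\phi-d\alpha-\delta\beta-\sum_{a}u_{a}\gamma^{(p)}_{a}$ and verify \eqref{eq:lack}: since $d$ and $\delta$ commute with $\Laplace$, hence with $G$, and $d\gamma^{(p)}_{a}=\delta\gamma^{(p)}_{a}=0$, one gets $d\phi_{0}=d\phi-d(\Laplace G\phi)$ and $\delta\phi_{0}=\delta\phi-\delta(\Laplace G\phi)$, and both right-hand sides vanish as soon as $\Laplace G$ acts as the identity on $d\phi$ and on $\delta\phi$; combined with $\int_{z^{(p)}_{a}}\phi_{0}=\int_{z^{(p)}_{a}}\phi-u_{a}=0$ this makes $\phi_{0}$ a non-cohomologous strong harmonic $p$-form, and it is moreover linearly independent of the $\gamma^{(p)}_{a}$ precisely because all its periods are zero.

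The hard part is the justification hidden in ``$\Laplace G$ acts as the identity on $d\phi$ and $\delta\phi$''. In the Riemannian case $\Laplace$ is elliptic, $G$ inverts $\Laplace$ on the orthogonal complement of the harmonic forms, $\Laplace G=\mathrm{id}-H$ with $H$ the harmonic projection, and \emph{because a harmonic form is then automatically closed and co-closed} one has $dH\phi=\delta H\phi=0$, whence $d(\Laplace G\phi)=d\phi$ and $\delta(\Laplace G\phi)=\delta\phi$ and $\phi_{0}$ is genuinely \emph{strong} harmonic rather than merely $\Laplace$-harmonic. In the pseudo-Riemannian case $\Laplace$ is hyperbolic, $H\phi$ need not be closed or co-closed (electromagnetic waves in vacuum, recalled in Section~\ref{subsec:harmonic}, are the canonical obstruction), and one is forbidden from invoking positive-definiteness of the norm to kill a form annihilated by $d$ and $\delta$ only in the integral sense; so the delicate step is to argue, from the hyperbolic existence hypothesis of Section~\ref{subsec:harmonic} alone, that $d\phi$ and $\delta\phi$ lie in the range on which $G$ genuinely inverts $\Laplace$ — equivalently that the harmonic remainder left after subtracting $\Laplace G\phi$ differs from a combination of the $\gamma^{(p)}_{a}$ by a form that is closed and co-closed. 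Everything else is bookkeeping already licensed by the results above, and it should be stressed that only existence is asserted: uniqueness fails, since two strong harmonic representatives of one class, or two admissible residues, can differ by a nonzero constant form of vanishing norm.
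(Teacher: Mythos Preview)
Your approach is essentially the paper's: apply each operator in $\{d,\delta,\int_{z^{(p)}_{a}}\}$ to isolate the corresponding piece and invoke the Green map $G$ of Section~\ref{subsec:harmonic} to produce $\alpha$ and $\beta$. The only cosmetic difference is that you set $\alpha=\delta G\phi$, $\beta=dG\phi$ whereas the paper writes $\alpha=G\circ\delta\phi$, $\beta=G\circ d\phi$ (these agree once $G$ commutes with $d$ and $\delta$); your explicit check of the residue conditions on $\phi_{0}$ and your honest discussion of the elliptic-versus-hyperbolic obstruction actually go further than the paper, which simply postulates solvability of $\Laplace\theta=\varphi$ and does not verify \eqref{eq:lack}.
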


\begin{proof}
	Each one of the terms can be determined by applying suitable operator as shown in Table \ref{tb:operadores}. By using derivative:
	\begin{equation}
	d\phi = d\delta\beta = \Laplace\beta
	\end{equation}
	By applying coderivative:
	\begin{equation}
	\delta\phi = \delta d\alpha = \Laplace\alpha
	\end{equation}
	By applying integration in cycle $z^{(p)}_{b}$:
	\begin{equation}
	\int_{z^{(p)}_{b}}\phi = u_{b}
	\end{equation}
	There are solutions for $\alpha=G\circ \delta\phi$ and  $\beta=G\circ d\phi$. The forms $d\phi$ and $\delta\phi$ play the role of sources in the Laplace equations. The additional condition $\delta\alpha=0$  and $d\beta=0$ are gauge constraints which allow the solutions based on the Laplacian operator and its inverse the Green one. 
\end{proof}

The forms $\alpha$ and $\beta$ are not fully defined; there are some freedom degrees in they solutions. They admit some transformations as: $\alpha \rightarrow \alpha + d\xi$, where $\xi$ is a $(p-2)$-form, with $p-2 \ge 0$; also $\beta \rightarrow \beta + \delta\psi$, where $\psi$ is a $(p+2)$-form, with $p+2 \le n$.

\begin{table}[t]
	\centering
	\caption{Summary of operators and reverses. The additional gauge conditions: $\delta\alpha=0$ and $d\beta=0$ allow the use of a pair of mutually inverse operators: the Laplacian and Green.}\label{tb:operadores}
	\begin{tabular}{|c||c|c|c||c|}\hline
		Operator$\setminus$Object & $d\alpha$   & $\delta\beta$               & $u_{a}\gamma^{(p)}_{a}$ & Reverse\\\hline\hline
		$d$      & 0                   & $d\delta\beta= \Delta\beta$ & 0                       & $G\circ\Delta\beta$\\\hline 
		$\delta$      & $\delta d\alpha=\Delta\alpha$                   & 0 & 0                       & $G\circ\Delta\alpha$\\\hline  
		$\int_{z^{(p)}_{a}}$ & 0 & 0 & $u_{a}$ & $u_{a}\gamma^{(p)}_{a}$\\\hline
	\end{tabular}
\end{table}

\subsection{Properties of the auxiliary matrices}

\begin{proposition}\label{prop:simetria}
	The matrix $\mathbf{E}^{(p)}$ verifies that:
	\begin{equation}
	\mathbf{E}^{(p)} = (-1)^{(n-p)p}(\mathbf{E}^{(n-p)})^{T}
	\end{equation}
\end{proposition}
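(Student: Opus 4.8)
The plan is to compute $\varepsilon^{(p)}_{ab}$ and $\varepsilon^{(n-p)}_{ba}$ directly from the defining Equation~\eqref{eq:firstdefinition} and relate them using the graded-commutativity of the wedge product. Recall that for a $p$-form $\mu$ and a $q$-form $\nu$ one has $\mu\wedge\nu = (-1)^{pq}\,\nu\wedge\mu$. Here $\gamma^{(p)}_{a}$ is a $p$-form and $\gamma^{(n-p)}_{b}$ is an $(n-p)$-form, so the relevant sign is $(-1)^{p(n-p)}$.

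The key steps, in order, are as follows. First I would write, by definition,
\begin{equation}
\varepsilon^{(p)}_{ab}=\int_{M}\gamma^{(p)}_{a}\wedge\gamma^{(n-p)}_{b}.
\end{equation}
Then I would apply graded commutativity inside the integrand to get
\begin{equation}
\gamma^{(p)}_{a}\wedge\gamma^{(n-p)}_{b} = (-1)^{p(n-p)}\,\gamma^{(n-p)}_{b}\wedge\gamma^{(p)}_{a}.
\end{equation}
Next I would recognize the right-hand integrand as exactly the integrand defining $\varepsilon^{(n-p)}_{ba}$, since in $\mathbf{E}^{(n-p)}=(\varepsilon^{(n-p)}_{cd})$ the first index runs over the $\boldsymbol\gamma^{(n-p)}$ set and the second over the $\boldsymbol\gamma^{(p)}$ set (using $n-(n-p)=p$). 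Hence
\begin{equation}
\varepsilon^{(p)}_{ab} = (-1)^{p(n-p)}\,\varepsilon^{(n-p)}_{ba} = (-1)^{(n-p)p}\,(\mathbf{E}^{(n-p)})^{T}_{ab},
\end{equation}
and since this holds for every pair of indices $a,b$, it is the claimed matrix identity $\mathbf{E}^{(p)} = (-1)^{(n-p)p}(\mathbf{E}^{(n-p)})^{T}$.

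There is essentially no obstacle here; the only point requiring a moment's care is bookkeeping of which $\gamma$-set each index of $\mathbf{E}^{(n-p)}$ refers to, so that the transpose appears correctly rather than a spurious relabelling. One should also note that the exponent $p(n-p)$ is symmetric, so writing it as $(n-p)p$ is purely cosmetic and matches the statement; and in the even-dimensional self-dual case $n=2p$ the identity degenerates to $\mathbf{E}^{(p)} = (-1)^{p^{2}}(\mathbf{E}^{(p)})^{T} = (-1)^{p}(\mathbf{E}^{(p)})^{T}$, which is consistent with (and anticipates) the discussion of self-duality in Section~\ref{sec:even}. No use of the metric signature, positive-definiteness, or harmonicity is needed — only the algebraic property $\alpha\wedge\hodgeop\beta=\beta\wedge\hodgeop\alpha$ is replaced here by the more elementary $\mu\wedge\nu=(-1)^{pq}\nu\wedge\mu$ for the wedge of a $p$- and $q$-form.
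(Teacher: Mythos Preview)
Your proof is correct and follows essentially the same approach as the paper: both invoke the graded commutativity of the wedge product, $\gamma^{(p)}_{a}\wedge\gamma^{(n-p)}_{b}=(-1)^{p(n-p)}\gamma^{(n-p)}_{b}\wedge\gamma^{(p)}_{a}$, to obtain $\varepsilon^{(p)}_{ab}=(-1)^{(n-p)p}\varepsilon^{(n-p)}_{ba}$. Your version simply makes the index bookkeeping for the transpose more explicit and adds the remark on the $n=2p$ case.
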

\begin{proof}
	It follows from the anti-commutative property of the wedge product, from it is verified that:
	\begin{equation}
	\int_{M}\gamma^{(p)}_{a}\wedge\gamma^{(n-p)}_{b} = (-1)^{(n-p)p}\int_{M}\gamma^{(n-p)}_{b}\wedge\gamma^{(p)}_{a}
	\end{equation}
	that implies:   $\varepsilon^{(p)}_{ab} = (-1)^{(n-p)p} \varepsilon^{(n-p)}_{ba}$.
\end{proof}

\begin{proposition}\label{prop:simpledecomposicion2}
	The Hodge dual, $\hodgeop\phi$, of the $p$-form  $\phi$ in Theorem \ref{th:simpledecomposicion1} is a $(n-p)$-form, so it admits the following decomposition:
	\begin{equation}\label{eq:i2}
	\hodgeop\phi = d\alpha' + \delta\beta' + \sum_{a=1}^{\beta_{n-p}} v_{a}\gamma^{(n-p)}_{a} \qquad v_{a} = \int_{z^{(n-p)}_{a}} \hodgeop\phi
	\end{equation}
	where $\alpha'$ is a $(n-p-1)$-form verifying $\delta\alpha'=0$ and $\beta'$ is a $(n-p+1)$-form verifying $d\beta'=0$.
\end{proposition}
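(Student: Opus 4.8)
The plan is to apply Theorem~\ref{th:simpledecomposicion1} verbatim to the $(n-p)$-form $\hodgeop\phi$. Since the Hodge star is a linear isomorphism $\hodgeop:\mathcal{A}^{p}\rightarrow\mathcal{A}^{n-p}$, the form $\hodgeop\phi$ is an ordinary $(n-p)$-form and nothing about its origin as a dual enters the argument; the decomposition of Theorem~\ref{th:simpledecomposicion1}, read with $p$ replaced throughout by $n-p$, immediately yields
\[
\hodgeop\phi = d\alpha' + \delta\beta' + \sum_{a=1}^{\beta_{n-p}} v_{a}\,\gamma^{(n-p)}_{a} + \phi_{0}',
\]
with $\alpha'\in\mathcal{A}^{n-p-1}$ subject to the gauge $\delta\alpha'=0$, $\beta'\in\mathcal{A}^{n-p+1}$ subject to $d\beta'=0$, and $\phi_{0}'$ the residual strong harmonic, non-cohomologous $(n-p)$-form, suppressed in the statement exactly as it is in the theorem.

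Next I would identify the coefficients. Applying $\int_{z^{(n-p)}_{b}}$ to both sides and using Proposition~\ref{prop:nodeltaint} (with $p$ replaced by $n-p$), which annihilates the $d\alpha'$ and $\delta\beta'$ terms, together with the normalization $\int_{z^{(n-p)}_{b}}\gamma^{(n-p)}_{a}=\delta_{ab}$ of Equation~\eqref{eq:normalization1} and the vanishing $\int_{z}\phi_{0}'=0$ of Equation~\eqref{eq:lack}, gives $v_{b}=\int_{z^{(n-p)}_{b}}\hodgeop\phi$, which is the claimed formula. Applying $d$ and $\delta$ and using $dd=\delta\delta=0$, $d\gamma^{(n-p)}_{a}=\delta\gamma^{(n-p)}_{a}=0$ and the gauge conditions then gives $\Laplace\beta' = d\hodgeop\phi$ and $\Laplace\alpha' = \delta\hodgeop\phi$, hence $\beta' = G\circ d\hodgeop\phi$ and $\alpha' = G\circ\delta\hodgeop\phi$ by the solvability assumption for $\Laplace\theta=\varphi$ in Section~\ref{subsec:harmonic}. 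Finally, the index range $a=1,\dots,\beta_{n-p}$ coincides with $a=1,\dots,\beta_{p}$ because $\beta_{p}=\beta_{n-p}$ by the \poincare duality isomorphism of Section~\ref{subsec:PoincareDuality}.

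There is essentially no obstacle here: the substance of the proposition is that Theorem~\ref{th:simpledecomposicion1} is dimension-agnostic, so the only thing to verify is that all of the ingredients it invokes are available in rank $n-p$ — linear independence of the classes $d\mathcal{A}^{n-p-1}$, $\delta\mathcal{A}^{n-p+1}$, $\mathcal{H}^{n-p}$ (Proposition~\ref{prop:linearindependence}), the normalization of the $\gamma^{(n-p)}$ set, and the solvability of the Laplace equation — and they are, since all of Section~\ref{sec:framework} is stated for arbitrary rank. If desired, I would append a short remark relating $d\hodgeop\phi$ and $\delta\hodgeop\phi$ to $\hodgeop\delta\phi$ and $\hodgeop d\phi$ through the defining formulas for $\delta$, so that the continuous sources of $\hodgeop\phi$ are the Hodge duals of the continuous sources of $\phi$; this is not needed for the statement but is convenient for the norm computation in Section~\ref{sec:norm}.
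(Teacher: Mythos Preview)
Your proposal is correct and follows essentially the same approach as the paper: apply the Hodge decomposition of Theorem~\ref{th:simpledecomposicion1} directly to the $(n-p)$-form $\hodgeop\phi$, then isolate each term by hitting the decomposition with $d$, $\delta$, and $\int_{z^{(n-p)}_{b}}$ to obtain $\Laplace\beta'=d\hodgeop\phi$, $\Laplace\alpha'=\delta\hodgeop\phi$, and $v_{b}=\int_{z^{(n-p)}_{b}}\hodgeop\phi$. Your write-up is in fact more explicit than the paper's, which omits the residual term and the justification via Proposition~\ref{prop:nodeltaint}, but the underlying argument is identical.
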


\begin{proof}
	Each one of the terms can be determined by applying the same operators as in  Theorem \ref{th:simpledecomposicion1}.
	The forms $d\hodgeop\phi$ and $\delta\hodgeop\phi$ play the role of sources in the corresponding Laplace equations.
\end{proof}

\begin{remark}\label{rm:dual1}
	The $\hodgeop\phi$ form decomposed in Equation \eqref{eq:i2} can be alternatively obtained from Equation \eqref{eq:i1} by using Hodge duality operator. However, this requires some linear relationship between the sets $\hodgeop\gamma^{(p)}$ and $\gamma^{(n-p)}$. This is a clue that suggests some relationship between $\hodgeop H^{p}$ and $H^{n-p}$.
\end{remark}

\begin{proposition}\label{prop:dualgamma}
	The Hodge dual of the representative form, $\hodgeop\gamma^{(p)}_{a}$, admit the following expression:
	\begin{equation}\label{eq:dualvs}
	\hodgeop\gamma^{(p)}_{a} = \sum_{b=1}^{\beta_{n-p}} \tau^{(n-p)}_{ab}\gamma^{(n-p)}_{b}
	\end{equation}
	where the 
	matrix $\mathbf{T}^{(p)}=(\tau^{(p)}_{ab}) \in M(\beta_p,\mathbb{R})$ is defined as:
	\begin{equation}
	\tau^{(n-p)}_{ab}=\int_{z^{(n-p)}_{b}} \hodgeop\gamma^{(p)}_{a}
	\end{equation}
\end{proposition}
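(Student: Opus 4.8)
The plan is to observe that $\hodgeop\gamma^{(p)}_{a}$ is itself a strong harmonic $(n-p)$-form, to feed it into the Hodge decomposition of Theorem~\ref{th:simpledecomposicion1} in degree $n-p$, to use linear independence to discard the exact and dual-exact parts, and finally to read off the coefficients by integrating over the cycles $z^{(n-p)}_{b}$.

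First I would check that $\hodgeop\gamma^{(p)}_{a}\in W^{n-p}(M)$. By the normalization~\eqref{eq:normalization1} the representative form satisfies $d\gamma^{(p)}_{a}=0$ and $\delta\gamma^{(p)}_{a}=0$. Using the identity $\delta\phi=(-1)^{C(k)}\hodgeop d\hodgeop\phi$ for a $k$-form $\phi$, the fact that $\hodgeop$ is a linear isomorphism $\mathcal{A}^{k}\to\mathcal{A}^{n-k}$ and that $\hodgeop\hodgeop=\pm\,\mathrm{id}$, the condition $\delta\gamma^{(p)}_{a}=0$ gives $\hodgeop d\hodgeop\gamma^{(p)}_{a}=0$, hence $d\hodgeop\gamma^{(p)}_{a}=0$; similarly $d\gamma^{(p)}_{a}=0$ gives $\delta\hodgeop\gamma^{(p)}_{a}=\pm\hodgeop d\hodgeop\hodgeop\gamma^{(p)}_{a}=\pm\hodgeop d\gamma^{(p)}_{a}=0$. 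Thus $\hodgeop\gamma^{(p)}_{a}$ is closed and dual closed, i.e.\ strong harmonic of degree $n-p$.

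Next I would apply Theorem~\ref{th:simpledecomposicion1} to the $(n-p)$-form $\hodgeop\gamma^{(p)}_{a}$, writing $\hodgeop\gamma^{(p)}_{a}=d\alpha'+\delta\beta'+\sum_{b}\tau^{(n-p)}_{ab}\gamma^{(n-p)}_{b}+\phi_{0}$. Since $\hodgeop\gamma^{(p)}_{a}$ lies in $W^{n-p}(M)$, while $d\alpha'\in d\mathcal{A}^{n-p-1}\subseteq Z^{n-p}\setminus W^{n-p}$ and $\delta\beta'\in\delta\mathcal{A}^{n-p+1}\subseteq\hodgeop Z^{p}\setminus W^{n-p}$ belong to classes linearly independent of $W^{n-p}$ by Proposition~\ref{prop:linearindependence}, both $d\alpha'$ and $\delta\beta'$ must vanish. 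Integrating the remaining identity over the cycle $z^{(n-p)}_{b}$ and using $\int_{z^{(n-p)}_{b}}\gamma^{(n-p)}_{c}=\delta_{bc}$ from~\eqref{eq:normalization1}, together with Proposition~\ref{prop:nodeltaint} and the residue property~\eqref{eq:lack}, gives $\int_{z^{(n-p)}_{b}}\hodgeop\gamma^{(p)}_{a}=\tau^{(n-p)}_{ab}$, which is precisely the stated definition of $\mathbf{T}^{(p)}$; discarding the non-cohomologous constant residue $\phi_{0}$ as in the Remark after Theorem~\ref{th:simpledecomposicion1} then produces~\eqref{eq:dualvs}.

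I expect the only delicate point to be the residue term $\phi_{0}$: strictly, the Hodge decomposition used here carries a constant strong-harmonic residue, so~\eqref{eq:dualvs} is exact only modulo that residue (equivalently, only once the convention of discarding it is adopted), unless one supplies a separate argument that the residue of the Hodge dual of a cohomology representative is itself null. The remaining steps are routine once the isomorphism property of $\hodgeop$ and the linear-independence and normalization facts already established are invoked.
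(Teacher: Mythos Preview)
Your argument is correct and follows essentially the same route as the paper: apply the Hodge decomposition to $\hodgeop\gamma^{(p)}_{a}$ (the paper does this via Proposition~\ref{prop:simpledecomposicion2}, you via Theorem~\ref{th:simpledecomposicion1} directly), observe that $\hodgeop\gamma^{(p)}_{a}\in W^{n-p}$, and use linear independence to kill $d\alpha'$ and $\delta\beta'$. You are in fact more careful than the paper in two respects: you explicitly verify that $\hodgeop\gamma^{(p)}_{a}$ is strong harmonic, and you flag the residue $\phi_{0}$, which the paper silently drops in Proposition~\ref{prop:simpledecomposicion2} and hence here as well.
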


\begin{proof}
	It follows from the Proposition \ref{prop:simpledecomposicion2} by using $\phi = \gamma^{(p)}_{a}$:
	\begin{eqnarray}
	\hodgeop\gamma^{(p)}_{a} & = & d\alpha' + \delta\beta' + \sum_{b=1}^{\beta_{n-p}} v_{b}\gamma^{(n-p)}_{b} \\
	v_{b} & = & \int_{z^{(n-p)}_{b}}\hodgeop\gamma^{(p)}_{a} 
	\end{eqnarray}
	
	The Hodge dual $\hodgeop\gamma^{(p)}_{a}$ is a strong harmonic $(n-p)$-form, so it belongs to $W^{n-p}(M)$. Left side term and last right side term belong to the class $W^{n-p}(M)$. Due to the linear independence in Proposition \ref{prop:objects}, the terms $d\alpha'$ and $\delta\beta'$  must be null.  
\end{proof}

\begin{proposition}\label{prop:productmatrix}
	The product of matrices $\boldsymbol{T}^{(p)}$ and $\boldsymbol{T}^{(n-p)}$ verifies: $\boldsymbol{T}^{(n-p)}\times\boldsymbol{T}^{(p)}=(-1)^{D(p)}\, \boldsymbol{I}$
\end{proposition}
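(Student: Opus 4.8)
The plan is to iterate the single-form identity from Proposition~\ref{prop:dualgamma}. Starting from
$\hodgeop\gamma^{(p)}_{a} = \sum_{b} \tau^{(n-p)}_{ab}\gamma^{(n-p)}_{b}$, I apply $\hodgeop$ once more to both sides; linearity of $\hodgeop$ lets me pull it through the finite sum, giving
$\hodgeop\hodgeop\gamma^{(p)}_{a} = \sum_{b} \tau^{(n-p)}_{ab}\,\hodgeop\gamma^{(n-p)}_{b}$.
Now I invoke Proposition~\ref{prop:dualgamma} again, this time in degree $n-p$, to rewrite $\hodgeop\gamma^{(n-p)}_{b} = \sum_{c} \tau^{(p)}_{bc}\gamma^{(p)}_{c}$ (note $\beta_{n-p}=\beta_p$, so the index ranges match and the matrices are square of the same size). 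Substituting yields
$\hodgeop\hodgeop\gamma^{(p)}_{a} = \sum_{c}\big(\sum_{b}\tau^{(n-p)}_{ab}\tau^{(p)}_{bc}\big)\gamma^{(p)}_{c}$,
which is exactly the $(a,c)$-entry of the matrix product $\boldsymbol{T}^{(n-p)}\times\boldsymbol{T}^{(p)}$ acting on the vector of $\gamma^{(p)}$'s.

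Next I bring in the double-Hodge identity recalled in Section~\ref{sec:framework}, namely $\hodgeop\hodgeop\phi = (-1)^{D(p)}\phi$ for any $p$-form $\phi$; applying it with $\phi=\gamma^{(p)}_{a}$ gives $\hodgeop\hodgeop\gamma^{(p)}_{a} = (-1)^{D(p)}\gamma^{(p)}_{a}$. Comparing the two expressions for $\hodgeop\hodgeop\gamma^{(p)}_{a}$, I get
$\sum_{c}\big(\sum_{b}\tau^{(n-p)}_{ab}\tau^{(p)}_{bc}\big)\gamma^{(p)}_{c} = (-1)^{D(p)}\gamma^{(p)}_{a}$,
i.e. $\sum_{c}\big((\boldsymbol{T}^{(n-p)}\boldsymbol{T}^{(p)})_{ac} - (-1)^{D(p)}\delta_{ac}\big)\gamma^{(p)}_{c}=0$.

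Finally I use the linear independence of the representative set $\{\gamma^{(p)}_{c}\}$ established in Section~\ref{subsec:cohomology} (the defining property $\sum_c w_c\gamma^{(p)}_c=0 \Rightarrow w_c=0$) to conclude that each coefficient vanishes, i.e. $(\boldsymbol{T}^{(n-p)}\boldsymbol{T}^{(p)})_{ac} = (-1)^{D(p)}\delta_{ac}$ for all $a,c$, which is precisely $\boldsymbol{T}^{(n-p)}\times\boldsymbol{T}^{(p)} = (-1)^{D(p)}\,\boldsymbol{I}$. The only point deserving care — and the main obstacle — is making sure the two invocations of Proposition~\ref{prop:dualgamma} are legitimate in the right degrees and that the sign exponent is genuinely $D(p)$ and not $D(n-p)$; here I use the symmetry $D(p)=D(n-p)$ noted in the excerpt, so no ambiguity remains. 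Everything else is bookkeeping with finite sums.
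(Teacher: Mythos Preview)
Your argument is correct and follows essentially the same route as the paper: apply $\hodgeop$ to the identity of Proposition~\ref{prop:dualgamma}, substitute the $(n-p)$-degree instance of the same proposition, invoke the double-Hodge sign $(-1)^{D(p)}$, and finish by linear independence of the $\gamma^{(p)}_{c}$. Your explicit remark that $D(p)=D(n-p)$ removes any sign ambiguity is a nice touch that the paper leaves implicit.
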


\begin{proof}
	From the Proposition \ref{prop:dualgamma} are obtained the following two expressions: 
	\begin{eqnarray}
	\hodgeop\gamma^{(p)}_{a} &=&  \sum_{b=1}^{\beta_{n-p}} \tau^{(n-p)}_{ab} \gamma^{(n-p)}_{b}\\
	\hodgeop\gamma^{(n-p)}_{b} &=&  \sum_{c=1}^{\beta_p} \tau^{(p)}_{bc} \gamma^{(p)}_{c}
	\end{eqnarray}
	
	By including the second Equation into the Hodge dual of the first one:
	\begin{eqnarray}
	\gamma^{(p)}_{a} &=&  \sum_{b=1}^{\beta_{n-p}}  \sum_{c=1}^{\beta_p}  (-1)^{D(p)}\tau^{(n-p)}_{ab} \tau^{(p)}_{bc} \gamma^{(p)}_{c}
	\end{eqnarray}
	it is conclude that:
	\begin{equation}
	\sum_{c=1}^{\beta_p}  \left[ \delta_{ac} - \sum_{b=1}^{\beta_{n-p}}  (-1)^{D(p)}\tau^{(n-p)}_{ab} \tau^{(p)}_{bc} \right]\gamma^{(p)}_{c} =0
	\end{equation}
	However, the representative forms are linearly independent. From this, it is obtained that must be:
	\begin{equation}
	\delta_{ac} - \sum_{b=1}^{\beta_p}  (-1)^{D(p)}\tau^{(n-p)}_{ab} \tau^{(p)}_{bc} =0 
	\end{equation}
\end{proof}

\begin{proposition}\label{prop:ecuacionET}
	The  matrix  $\boldsymbol{E}^{(p)}$ and $\boldsymbol{T}^{(n-p)}$  verify:
	\begin{equation}
	\mathbf{E}^{(p)}\times(\mathbf{T}^{(n-p)})^{T} = \boldsymbol{\Lambda}^{(p)}
	\end{equation}
	where $\boldsymbol{\Lambda}^{(p})\in M(\beta_p,\mathbb{R})$ is the matrix containing the elements $\lambda^{(p)}_{ab}=(\gamma^{(p)}_{a}, \gamma^{(p)}_{b})$.	
	The elements of  $\boldsymbol{T}^{(n-p)}$  are:	
	\begin{equation}
	\tau^{(n-p)}_{b,P(a)} = \frac{\lambda^{(p)}_{ab}}{\varepsilon^{(p)}_{a,P(a)}}  
	\end{equation}
\end{proposition}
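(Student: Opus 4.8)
The plan is to evaluate the $(a,b)$ entry of the matrix product on the left directly from the definitions, and then to use the sparse structure of $\mathbf{E}^{(p)}$ supplied by \poincare duality to read off the entries of $\mathbf{T}^{(n-p)}$. First I would expand, using the definition of the transpose and ordinary matrix multiplication together with $\beta_{p}=\beta_{n-p}$ (which makes the product well defined),
\begin{equation}
\left(\mathbf{E}^{(p)}\times(\mathbf{T}^{(n-p)})^{T}\right)_{ab}=\sum_{c=1}^{\beta_{p}}\varepsilon^{(p)}_{ac}\,\tau^{(n-p)}_{bc}.
\end{equation}
Substituting $\varepsilon^{(p)}_{ac}=\int_{M}\gamma^{(p)}_{a}\wedge\gamma^{(n-p)}_{c}$ from Equation \eqref{eq:firstdefinition} and pulling the finite sum inside the integral, by linearity of the wedge product in its second argument and linearity of the integral, gives
\begin{equation}
\sum_{c=1}^{\beta_{p}}\varepsilon^{(p)}_{ac}\,\tau^{(n-p)}_{bc}=\int_{M}\gamma^{(p)}_{a}\wedge\left(\sum_{c=1}^{\beta_{p}}\tau^{(n-p)}_{bc}\,\gamma^{(n-p)}_{c}\right)=\int_{M}\gamma^{(p)}_{a}\wedge\hodgeop\gamma^{(p)}_{b},
\end{equation}
where the last equality is Proposition \ref{prop:dualgamma} applied to the form $\gamma^{(p)}_{b}$. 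Since the final integral is by definition $\lambda^{(p)}_{ab}=(\gamma^{(p)}_{a},\gamma^{(p)}_{b})$, this yields $\mathbf{E}^{(p)}\times(\mathbf{T}^{(n-p)})^{T}=\boldsymbol{\Lambda}^{(p)}$.

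To obtain the explicit entries I would invoke the fact, established from \poincare duality, that each row and column of $\mathbf{E}^{(p)}$ has exactly one non-null element, namely $\varepsilon^{(p)}_{a,P(a)}\neq 0$ while $\varepsilon^{(p)}_{ac}=0$ for $c\neq P(a)$. In the identity $\sum_{c}\varepsilon^{(p)}_{ac}\,\tau^{(n-p)}_{bc}=\lambda^{(p)}_{ab}$ only the term $c=P(a)$ then survives, so $\varepsilon^{(p)}_{a,P(a)}\,\tau^{(n-p)}_{b,P(a)}=\lambda^{(p)}_{ab}$, and dividing by the non-zero $\varepsilon^{(p)}_{a,P(a)}$ gives $\tau^{(n-p)}_{b,P(a)}=\lambda^{(p)}_{ab}/\varepsilon^{(p)}_{a,P(a)}$. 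Equivalently this is $(\mathbf{T}^{(n-p)})^{T}=(\mathbf{E}^{(p)})^{-1}\boldsymbol{\Lambda}^{(p)}$, which is legitimate because $\mathbf{E}^{(p)}$ is non-singular.

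I do not expect a genuine obstacle here; the argument is essentially index bookkeeping. The points that require care are keeping track that the row index of $\mathbf{E}^{(p)}$ labels the $\boldsymbol{\gamma}^{(p)}$ set while its column index labels the $\boldsymbol{\gamma}^{(n-p)}$ set (so that the transpose in $(\mathbf{T}^{(n-p)})^{T}$ is placed on the correct factor), matching the index $b$ of $\tau^{(n-p)}_{bc}$ with the form $\gamma^{(p)}_{b}$ whose Hodge dual is being expanded, and observing that $\boldsymbol{\Lambda}^{(p)}$ is in general a full matrix (symmetric, by $(A,B)=(B,A)$) rather than a diagonal one, so no further simplification of the right-hand side should be sought.
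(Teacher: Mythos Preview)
Your argument is correct and follows essentially the same route as the paper: expand $\hodgeop\gamma^{(p)}_{b}$ via Proposition~\ref{prop:dualgamma} inside the bilinear integral to obtain $\lambda^{(p)}_{ab}=\sum_{c}\tau^{(n-p)}_{bc}\,\varepsilon^{(p)}_{ac}$, and then use the one-nonzero-per-row structure of $\mathbf{E}^{(p)}$ to isolate $\tau^{(n-p)}_{b,P(a)}$. The only cosmetic difference is that the paper starts from $(\gamma^{(p)}_{a},\gamma^{(p)}_{b})$ and arrives at the sum, whereas you start from the matrix product and recognise it as $(\gamma^{(p)}_{a},\gamma^{(p)}_{b})$; the computations are identical.
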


\begin{proof}
	The bilinear integral $(\gamma^{(p)}_{a}, \gamma^{(p)}_{b})$ is defined as: 
	\begin{eqnarray}
	(\gamma^{(p)}_{a}, \gamma^{(p)}_{b}) & = & \int_{M}  \gamma^{(p)}_{a} \wedge \hodgeop\gamma^{(p)}_{b} \\ &=& \sum_{c=1}^{\beta_p} \tau^{(n-p)}_{bc} \int_{M}  \gamma^{(p)}_{a} \wedge \gamma^{(n-p)}_{c} \\ & = & \sum_{c=1}^{\beta_p} \tau^{(n-p)}_{bc}  \varepsilon^{(p)}_{ac}
	\end{eqnarray}
	The elements no null in matrix $\mathbf{E}^{(p)}$ are $\varepsilon^{(p)}_{a,P(a)}$, therefore it is obtained that: $(\gamma^{(p)}_{a}, \gamma^{(p)}_{b})  = \tau^{(n-p)}_{b,P(a)} \varepsilon^{(p)}_{a,P(a)} $. 
\end{proof}

\begin{proposition}\label{prop:bothintegrals}
	If $\phi$ is a $p$-form, it is verified that:
	\begin{eqnarray}
	\int_{z^{p}_{a}} \phi &=&  \sum_{b=1}^{\beta_{n-p}}(-1)^{D(p)} \tau^{(p)}_{ba}\int_{z^{n-p}_{b}} \hodgeop\phi\\
	\int_{z^{n-p}_{a}} \hodgeop\phi &=& \sum_{b=1}^{\beta_{p}} \tau^{(n-p)}_{ba}\int_{z^{p}_{b}} \phi
	\end{eqnarray}
\end{proposition}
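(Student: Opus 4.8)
The plan is to express both integrals directly in terms of the $\gamma$-expansions already established and then match coefficients using linear independence of the representative forms. First I would invoke Theorem~\ref{th:simpledecomposicion1} to write $\phi = d\alpha + \delta\beta + \sum_{b=1}^{\beta_p} u_b \gamma^{(p)}_b + \phi_0$ with $u_b = \int_{z^{(p)}_b}\phi$, so that by Proposition~\ref{prop:nodeltaint} and the defining property $\int_{z^{(p)}_a}\phi_0 = 0$ of the residue we immediately get $\int_{z^{(p)}_a}\phi = u_a$. Next, apply the Hodge star: $\hodgeop\phi = \hodgeop d\alpha + \hodgeop\delta\beta + \sum_b u_b \hodgeop\gamma^{(p)}_b + \hodgeop\phi_0$, and use Proposition~\ref{prop:dualgamma} to replace $\hodgeop\gamma^{(p)}_b = \sum_{c=1}^{\beta_{n-p}} \tau^{(n-p)}_{bc}\gamma^{(n-p)}_c$. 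The terms $\hodgeop d\alpha$ and $\hodgeop\delta\beta$ are (by the standard identities relating $\hodgeop$, $d$, $\delta$) again of the form $\delta(\cdot)$ and $d(\cdot)$ respectively — i.e. they lie in $\delta\mathcal{A}^{n-p+1}$ and $d\mathcal{A}^{n-p-1}$ — and $\hodgeop\phi_0$ is still a constant strong harmonic form with vanishing cycle integrals, so integrating $\hodgeop\phi$ over $z^{(n-p)}_a$ kills all of them by Proposition~\ref{prop:nodeltaint} again, leaving $\int_{z^{(n-p)}_a}\hodgeop\phi = \sum_b u_b \tau^{(n-p)}_{ba} = \sum_b \tau^{(n-p)}_{ba}\int_{z^{(p)}_b}\phi$, which is the second identity.

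For the first identity I would run the same argument in the reverse direction applied to the $(n-p)$-form $\hodgeop\phi$: by Proposition~\ref{prop:simpledecomposicion2} its cycle integrals are $v_a = \int_{z^{(n-p)}_a}\hodgeop\phi$, and taking the Hodge dual of its decomposition and using $\hodgeop\gamma^{(n-p)}_b = \sum_c \tau^{(p)}_{bc}\gamma^{(p)}_c$ together with $\hodgeop\hodgeop\phi = (-1)^{D(p)}\phi$ gives $(-1)^{D(p)}\phi = (\text{exact}) + (\text{dual exact}) + \sum_{b,c}\tau^{(p)}_{bc} v_b \gamma^{(p)}_c + (\text{residue})$. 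Integrating over $z^{(p)}_a$ and again discarding the $d$, $\delta$ and $\phi_0$ contributions via Proposition~\ref{prop:nodeltaint} yields $(-1)^{D(p)}\int_{z^{(p)}_a}\phi = \sum_b \tau^{(p)}_{ba} v_b$, i.e. $\int_{z^{(p)}_a}\phi = \sum_b (-1)^{D(p)}\tau^{(p)}_{ba}\int_{z^{(n-p)}_b}\hodgeop\phi$, as claimed. One could alternatively derive the first identity from the second by substituting $\phi \to \hodgeop\phi$ (swapping the roles of $p$ and $n-p$) and then using Proposition~\ref{prop:productmatrix}, $\mathbf{T}^{(n-p)}\mathbf{T}^{(p)} = (-1)^{D(p)}\mathbf{I}$, to invert the $\mathbf{T}^{(n-p)}$ matrix; I would probably present whichever of the two routes is shorter.

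The main obstacle I anticipate is the careful bookkeeping needed to confirm that $\hodgeop d\alpha$ and $\hodgeop\delta\beta$ genuinely fall into the ``discardable'' classes $\delta\mathcal{A}^{n-p+1}$ and $d\mathcal{A}^{n-p-1}$ — this uses the relations $\hodgeop d = \pm \delta \hodgeop$ and $\hodgeop\delta = \pm d\hodgeop$ with the sign exponents $C(p)$, $D(p)$ spelled out in Section~\ref{sec:framework} — and likewise that $\hodgeop\phi_0$ remains a residue (strong harmonic with zero periods), for which one checks $d\hodgeop\phi_0$, $\delta\hodgeop\phi_0$ and the cycle integrals all vanish using $d\phi_0 = \delta\phi_0 = 0$ and Proposition~\ref{prop:nodeltaint}. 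Once those membership facts are in place, the rest is just linear algebra over the linearly independent family $\{\gamma^{(p)}_a\}$, so the proof is essentially a direct computation with no genuine difficulty beyond sign-tracking.
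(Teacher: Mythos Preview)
Your proposal is correct and follows essentially the same route as the paper: both arguments start from the Hodge decompositions of $\phi$ and $\hodgeop\phi$ (Theorem~\ref{th:simpledecomposicion1} and Proposition~\ref{prop:simpledecomposicion2}), dualize one of them, replace $\hodgeop\gamma$ via Proposition~\ref{prop:dualgamma}, and then read off the relation between the $u_a$ and $v_a$ coefficients --- the paper does this by comparing the two resulting $\gamma^{(p)}$-expansions of $\phi$ using linear independence, whereas you do it by integrating over a cycle and invoking Proposition~\ref{prop:nodeltaint}, which amounts to the same thing since the cycle integrals are exactly the $\gamma$-coefficients. The only cosmetic difference is the order: the paper obtains the first identity by comparison and then deduces the second from Proposition~\ref{prop:productmatrix}, while you derive the second directly and then either repeat the argument symmetrically or invert via Proposition~\ref{prop:productmatrix} for the first.
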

\begin{proof}
	From the Theorem \ref{th:simpledecomposicion1} and Proposition \ref{prop:simpledecomposicion2} is obtained that:
	\begin{eqnarray}
	\phi & = & d\alpha + \delta\beta +\sum_{a=1}^{\beta_{p}} \left[ \int_{z^{(p)}_{a}}\phi \right]\gamma^{(p)}_{a}\label{eq:1}\\
	\hodgeop\phi &=& d\alpha' + \delta\beta' + \sum_{c=1}^{\beta_{n-p}} \left[ \int_{z^{(n-p)}_{c}}\hodgeop\phi\right]\gamma^{(n-p)}_{c}
	\end{eqnarray}
	
	By using Hodge dual to the last Equation and based on the result of Proposition \ref{prop:dualgamma}, it is obtained in successive steps:
	\begin{eqnarray}
	\phi &=& (-1)^{D(p)}\hodgeop d\alpha' + (-1)^{D(p)}\hodgeop\delta\beta' \\ && + \sum_{c=1}^{\beta_{n-p}} \left[ \int_{z^{(n-p)}_{c}}\hodgeop\phi\right](-1)^{D(p)}\hodgeop\gamma^{(n-p)}_{c}\\
	 &=& \delta\alpha'' + d\beta'' \\ &&+ \sum_{c=1}^{\beta_{n-p}} \left[ \int_{z^{(n-p)}_{c}}\hodgeop\phi\right](-1)^{D(p)} \left[  \sum_{a=1}^{\beta_p} \tau^{(p)}_{ca} \gamma^{(p)}_{a}\right]\\
	 &=& \delta\alpha'' + d\beta'' \\ &&+ \sum_{a=1}^{\beta_p} \left(
	(-1)^{D(p)} \tau^{(p)}_{ca} \left[ \int_{z^{(n-p)}_{c}}\hodgeop\phi\right]
	\right)\gamma^{(p)}_{a}\label{eq:2}
	\end{eqnarray}
	where $(-1)^{D(p)}\hodgeop d\alpha' = \delta\alpha'' = \hodgeop d\hodgeop \alpha''$, that is: $(-1)^{D(p)}\alpha'  = \hodgeop \alpha''$. Also: $ (-1)^{D(p)}\hodgeop\delta\beta' =  d \hodgeop\beta'= d\beta''$, that is:  $\hodgeop\beta'= \beta''$. 
	By comparing Equations (\ref{eq:1}) and (\ref{eq:2}) it is obtained the first of proposed expressions. The second expression is achieved based on the result of Proposition \ref{prop:productmatrix} applied to the first expression. 
\end{proof}

\section{Canonical Decomposition of the Norm}\label{sec:norm}

The Hodge Decomposition Theorem allows expressing any differential form by decomposing it in several canonical terms with specific formal properties. Similarly, its norm can be also decomposed in some canonical terms. 

\begin{proposition}\label{prop:norma1}
	Let $\phi$  be a $p$-forms with Hodge Decomposition as follows, also for its Hodge dual, $\hodgeop\phi$:
	\begin{eqnarray}
	\phi &=& d\alpha + \delta\beta + \sum_{a=1}^{\beta_{p}}u_{a}\gamma^{(p)}_{a} 
	\\
	\hodgeop\phi &=& d\alpha' + \delta\beta' + \sum_{a=1}^{\beta_{n-p}}v_{a}\gamma^{(n-p)}_{a} 
	\end{eqnarray}
	
	\begin{eqnarray}
	\delta\alpha=d\beta=\delta\alpha'=d\beta'=0 \\ u_{a}=\int_{z^{(p)}_{a}}\phi \qquad    v_{a}=\int_{z^{(n-p)}_{a}}\hodgeop\phi  
	\end{eqnarray}
	
	Its norm $(\phi,\phi)$ can be expressed as:
	\begin{equation}
	(\phi,\phi)= (d\alpha,d\alpha)+(\delta\beta,\delta\beta)+(\phi_{h},\phi_{h}) 
	\end{equation}
\end{proposition}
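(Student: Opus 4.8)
The plan is to expand $(\phi,\phi)$ over the four summands of the Hodge decomposition of Theorem~\ref{th:simpledecomposicion1}, writing the harmonic part as $\phi_h=\sum_{a=1}^{\beta_p}u_a\gamma^{(p)}_a$, and then to show that every mixed term vanishes. By the bilinearity of $(\cdot,\cdot)$ and the symmetry $(A,B)=(B,A)$ recalled in Section~\ref{sec:framework},
\begin{equation}
\begin{aligned}
(\phi,\phi)={}&(d\alpha,d\alpha)+(\delta\beta,\delta\beta)+(\phi_h,\phi_h)+(\phi_0,\phi_0)\\
&+2\big[(d\alpha,\delta\beta)+(d\alpha,\phi_h)+(d\alpha,\phi_0)+(\delta\beta,\phi_h)+(\delta\beta,\phi_0)+(\phi_h,\phi_0)\big],
\end{aligned}
\end{equation}
so it remains to annihilate the six cross terms in the bracket.

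Five of them are immediate from the adjunction $(dC,A)=(C,\delta A)$ recorded in Section~\ref{sec:framework} (and its twin $(\delta B,A)=(B,dA)$, the same relation read one degree up together with the symmetry of $(\cdot,\cdot)$), combined with $dd=0$, $\delta\delta=0$ and the strong-harmonicity $d\gamma^{(p)}_a=\delta\gamma^{(p)}_a=0$, $d\phi_0=\delta\phi_0=0$: indeed $(d\alpha,\delta\beta)=(\alpha,\delta\delta\beta)=0$, $(d\alpha,\phi_0)=(\alpha,\delta\phi_0)=0$, $(\delta\beta,\phi_0)=(\beta,d\phi_0)=0$, $(d\alpha,\phi_h)=\sum_a u_a(\alpha,\delta\gamma^{(p)}_a)=0$, and $(\delta\beta,\phi_h)=\sum_a u_a(\beta,d\gamma^{(p)}_a)=0$. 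The gauge constraints $\delta\alpha=0$, $d\beta=0$ are not even needed for these.

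The only substantive term is $(\phi_h,\phi_0)=\sum_a u_a(\gamma^{(p)}_a,\phi_0)$, where adjunction alone does not suffice because $\gamma^{(p)}_a$ is closed but not exact; this is where the paper's ``linear independence $+$ \poincare duality'' toolkit must be invoked rather than plain integration by parts. I would use that the residue has vanishing periods, $\int_{z^{(p)}_a}\phi_0=0$ for $a=1,\dots,\beta_p$ (Equation~\eqref{eq:lack}); since these cycles span homology, the de Rham theorem forces $\phi_0=d\chi$ for some $(p-1)$-form $\chi$, whence $(\gamma^{(p)}_a,\phi_0)=(\gamma^{(p)}_a,d\chi)=(\delta\gamma^{(p)}_a,\chi)=0$ by co-closedness of $\gamma^{(p)}_a$. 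Equivalently, bypassing de Rham, $(\gamma^{(p)}_a,\phi_0)=(\phi_0,\gamma^{(p)}_a)=\sum_b\tau^{(n-p)}_{ab}\int_M\phi_0\wedge\gamma^{(n-p)}_b$ by Proposition~\ref{prop:dualgamma}, and each summand vanishes since $\int_M\phi_0\wedge\gamma^{(n-p)}_{P(c)}=\varepsilon^{(p)}_{c,P(c)}\int_{z^{(p)}_c}\phi_0=0$ by the \poincare duality relation~\eqref{eq:bottdefinition3} (every index $b$ equals $P(c)$ for a unique $c$). With all six brackets gone the stated identity follows, the harmonic contribution being left lumped as $(\phi_h,\phi_h)=\sum_{a,b}u_au_b\lambda^{(p)}_{ab}$ because the off-diagonal $(\gamma^{(p)}_a,\gamma^{(p)}_b)$ need not vanish unless the representative set is orthogonal. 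The main obstacle is concentrated exactly in this last term: one must be careful that the zero-period condition on $\phi_0$ is invoked over a \emph{spanning} family of cycles, so that de Rham --- or the Bott--Tu duality~\eqref{eq:bottdefinition3} --- genuinely applies.
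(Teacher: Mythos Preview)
Your proof is correct and follows the paper's approach: the paper kills the cross terms $(d\xi,\delta\pi)$, $(d\xi,\gamma)$, $(\delta\xi,\gamma)$, $(d\alpha,\phi_0)$, $(\delta\beta,\phi_0)$ by adjunction exactly as you do, and for the only nontrivial term $(\phi_0,\gamma^{(p)}_a)$ it uses precisely your second argument, expanding $\hodgeop\gamma^{(p)}_a$ via Proposition~\ref{prop:dualgamma} and then invoking the Bott--Tu relation~\eqref{eq:bottdefinition3} together with $\int_{z^{(p)}_c}\phi_0=0$. Your de~Rham alternative (forcing $\phi_0=d\chi$ from its vanishing periods) is not in the paper but is a legitimate shortcut---and in fact it shows more, since then $(\phi_0,\phi_0)=(\chi,\delta\phi_0)=0$ as well.
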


\begin{proof}
	If $A$ and $B$ are $p$-forms, they verifies:  $(A,B)=(B,A)$ and also verifies\cite{Flanders:2012}: $(dC,B)=(C,\delta B)$. From these properties it is follows that all the cross terms: $(d\xi,\delta\pi)$, $(d\xi,\gamma)$, $(\delta\xi,\gamma)$, $(d\alpha,\phi_{0})$ and $(\delta\beta,\phi_{0})$ are all null.. 
\end{proof}

\begin{proposition}\label{prop:norma4}
	It is verified that: $(d\alpha,d\alpha)= (\alpha,\delta\phi )$ and $(\delta\beta,\delta\beta)= (\beta,d\phi)$
\end{proposition}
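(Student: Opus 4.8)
The plan is to use the Hodge decomposition of $\phi$ from Theorem~\ref{th:simpledecomposicion1} together with the adjointness relation $(dC,B)=(C,\delta B)$ quoted in Section~\ref{sec:framework}, exactly in the spirit of the proof of Proposition~\ref{prop:norma1}. First I would write $\phi = d\alpha + \delta\beta + \sum_a u_a\gamma^{(p)}_a + \phi_0$ with the gauge conditions $\delta\alpha = 0$ and $d\beta = 0$. For the first identity I would start from $(d\alpha,d\alpha)$ and push one $d$ across the inner product: $(d\alpha,d\alpha) = (\alpha,\delta d\alpha) = (\alpha,\Laplace\alpha)$, using $\delta\alpha = 0$ so that $\delta d\alpha = (\delta d + d\delta)\alpha = \Laplace\alpha$. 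Then, since $\delta\phi = \delta d\alpha = \Laplace\alpha$ was established in the proof of Theorem~\ref{th:simpledecomposicion1} (the $d\alpha$ term is the only one surviving $\delta$, because $\delta\delta\beta = 0$, $\delta\gamma^{(p)}_a = 0$ and $\delta\phi_0 = 0$), I conclude $(d\alpha,d\alpha) = (\alpha,\delta\phi)$.

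The second identity is the mirror image: starting from $(\delta\beta,\delta\beta)$, I would use the adjoint relation in the form $(\delta\beta,\delta\beta) = (\beta,d\delta\beta) = (\beta,\Laplace\beta)$, where $d\beta = 0$ gives $d\delta\beta = (d\delta + \delta d)\beta = \Laplace\beta$. Again appealing to Theorem~\ref{th:simpledecomposicion1}, $d\phi = d\delta\beta = \Laplace\beta$ since $d d\alpha = 0$, $d\gamma^{(p)}_a = 0$ and $d\phi_0 = 0$, so $(\delta\beta,\delta\beta) = (\beta,d\phi)$. One minor bookkeeping point: the adjointness $(dC,B) = (C,\delta B)$ as stated applies to $C$ a $(p-1)$-form and $B$ a $p$-form, and the dual statement $(\delta C, B) = (C, dB)$ for $C$ a $(p+1)$-form follows either by the same symmetry of $\alpha\wedge\hodgeop\beta$ or directly from the first applied with roles shifted; I would just invoke it.

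The only real subtlety — and the step I expect to flag rather than fight — is the residue $\phi_0$. The cleanest argument avoids needing $\delta\phi_0 = 0$ and $d\phi_0 = 0$ to be handled separately at the inner-product level: since $\delta\phi = \Laplace\alpha$ and $d\phi = \Laplace\beta$ hold as form identities (not just up to the residue), the computation $(\alpha,\delta\phi) = (\alpha,\delta d\alpha)$ is exact. So the proof goes through verbatim provided one accepts the form-level identities $\delta\phi = \Laplace\alpha$, $d\phi = \Laplace\beta$ from the previous theorem, which already incorporate the fact that $\phi_0$ is strong harmonic. This is essentially a two-line verification once those identities are in hand, so I would keep the write-up short and point explicitly back to Theorem~\ref{th:simpledecomposicion1} and to the adjoint property $(dC,B)=(C,\delta B)$.
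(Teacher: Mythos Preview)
Your proposal is correct and follows essentially the same approach as the paper: apply the adjointness $(dC,B)=(C,\delta B)$ to $(d\alpha,d\alpha)$ and $(\delta\beta,\delta\beta)$, then use the identities $\delta\phi=\delta d\alpha$ and $d\phi=d\delta\beta$ from Theorem~\ref{th:simpledecomposicion1}. Your detour through $\Laplace\alpha$ and $\Laplace\beta$ is harmless but unnecessary; the paper's one-line version simply writes $(d\alpha,d\alpha)=(\alpha,\delta d\alpha)=(\alpha,\delta\phi)$ and $(\delta\beta,\delta\beta)=(d\delta\beta,\beta)=(d\phi,\beta)$.
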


\begin{proof}
	It follows  from: $(dC,B)=(C,\delta B)$, $d\phi = d\delta\beta$ and $\delta \phi = \delta d\alpha$. 
\end{proof}

\begin{proposition}\label{prop:norma2}
	It is verified that the cohomologous term, $(\phi_{h},\phi_{h})$, in Proposition \ref{prop:norma1}	can be decomposed as: 
	\begin{equation}
	(\phi_{h},\phi_{h}) = \sum_{a=1}^{\beta_{p}}  \varepsilon^{(p)}_{a,P(a)} u_{a} v_{P(a)}
	\end{equation}	
\end{proposition}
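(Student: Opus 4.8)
The plan is to expand $(\phi_h,\phi_h)$ straight from the definition of the bilinear integral and feed in the linear expression for the Hodge dual of the representative forms established in Proposition~\ref{prop:dualgamma}. Following Proposition~\ref{prop:norma1}, write $\phi_h = \sum_{a=1}^{\beta_p} u_a \gamma^{(p)}_a$ with $u_a = \int_{z^{(p)}_a}\phi$. First I would compute $\hodgeop\phi_h = \sum_{c} u_c\,\hodgeop\gamma^{(p)}_c = \sum_{c,b} u_c\,\tau^{(n-p)}_{cb}\gamma^{(n-p)}_b$ by Proposition~\ref{prop:dualgamma}, and then, using the definition of $\varepsilon^{(p)}_{ab}$ in Equation~\eqref{eq:firstdefinition},
\begin{equation}
(\phi_h,\phi_h) = \int_M \phi_h\wedge\hodgeop\phi_h = \sum_{a,b,c} u_a u_c\,\tau^{(n-p)}_{cb}\int_M \gamma^{(p)}_a\wedge\gamma^{(n-p)}_b = \sum_{a,b,c} u_a u_c\,\tau^{(n-p)}_{cb}\,\varepsilon^{(p)}_{ab}.
\end{equation}

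The second step is to exploit the Poincar\'e-duality structure of $\mathbf{E}^{(p)}$: since $\varepsilon^{(p)}_{ab}$ has exactly one non-null entry in each row and column, namely $\varepsilon^{(p)}_{a,P(a)}$, the sum over $b$ collapses to $b=P(a)$, giving
\begin{equation}
(\phi_h,\phi_h) = \sum_{a} \varepsilon^{(p)}_{a,P(a)}\, u_a \Bigl(\,\sum_c \tau^{(n-p)}_{c,P(a)}\, u_c\Bigr).
\end{equation}
Finally I would recognize the bracketed inner sum as $v_{P(a)}$: the second identity of Proposition~\ref{prop:bothintegrals} reads $v_a = \int_{z^{(n-p)}_a}\hodgeop\phi = \sum_b \tau^{(n-p)}_{ba}\int_{z^{(p)}_b}\phi = \sum_b \tau^{(n-p)}_{ba} u_b$, and relabelling $a\mapsto P(a)$ turns this into exactly $\sum_c \tau^{(n-p)}_{c,P(a)} u_c$. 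Substituting yields $(\phi_h,\phi_h) = \sum_a \varepsilon^{(p)}_{a,P(a)}\, u_a\, v_{P(a)}$, which is the claimed decomposition.

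The only delicate point — a matter of care rather than of ideas — is the index bookkeeping around the Poincar\'e dual function $P(\cdot)$: one has to check that the same matrix $\mathbf{T}^{(n-p)}$ genuinely appears both in the expansion of $\hodgeop\phi_h$ and in the formula for $v$, and that contracting against the unique non-null column of $\mathbf{E}^{(p)}$ reconstitutes $v_{P(a)}$ rather than a transposed variant $v_a$. Everything else is a routine consequence of $\int_M A\wedge B = (-1)^{(\deg A)(\deg B)}\int_M B\wedge A$ together with the linear independence of the $\gamma^{(p)}$ set underlying all of these matrices; in particular the cross terms and the residue contributions were already shown to vanish in Proposition~\ref{prop:norma1}, so no further work is needed there.
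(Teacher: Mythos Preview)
Your argument is correct, but it takes a longer route than the paper's own proof. The paper simply uses the dual decomposition already recorded in Proposition~\ref{prop:norma1}: since $\hodgeop\phi$ has harmonic part $\sum_{b} v_{b}\gamma^{(n-p)}_{b}$, one may write directly
\[
(\phi_{h},\phi_{h}) = \int_{M}\phi_{h}\wedge\hodgeop\phi_{h}
= \sum_{a,b} u_{a} v_{b}\int_{M}\gamma^{(p)}_{a}\wedge\gamma^{(n-p)}_{b}
= \sum_{a,b} u_{a} v_{b}\,\varepsilon^{(p)}_{ab}
= \sum_{a}\varepsilon^{(p)}_{a,P(a)}\,u_{a}\,v_{P(a)},
\]
with the last step coming from the sparsity of $\mathbf{E}^{(p)}$. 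You instead expand $\hodgeop\gamma^{(p)}_{c}$ through the $\tau$-matrix (Proposition~\ref{prop:dualgamma}), collapse the $\varepsilon$ sum, and then appeal to Proposition~\ref{prop:bothintegrals} to recognise $\sum_{c}\tau^{(n-p)}_{c,P(a)}u_{c}$ as $v_{P(a)}$. The two approaches are equivalent --- your detour through $\mathbf{T}^{(n-p)}$ is exactly what underlies the identification $\hodgeop\phi_{h}=\sum_{b}v_{b}\gamma^{(n-p)}_{b}$ that the paper invokes tacitly --- but the paper's version is one line because it cashes in the dual Hodge decomposition already set up, whereas yours re-derives that identification on the spot. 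Your bookkeeping around $P(\cdot)$ is fine.
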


\begin{proof}
	From its definition:
	\begin{eqnarray}
	(\phi_{h},\phi_{h}) &=& \int_{M} \phi_{h}\wedge\hodgeop\phi_{h}\\ &=&\sum_{a=1}^{\beta_{p}}\sum_{a=1}^{\beta_{p}} u_{a} v_{b}  \int_{M} \gamma^{(p)}_{a}  \wedge  \gamma^{(n-p)}_{b} \\ &=& 
	\sum_{a=1}^{\beta_{p}}\sum_{a=1}^{\beta_{p}} u_{a} v_{b}  \varepsilon^{(p)}_{ab}=
	\sum_{a=1}^{\beta_{p}}  \varepsilon^{(p)}_{a,P(a)} u_{a} v_{P(a)}
	\end{eqnarray}	
\end{proof}

\begin{theorem}[Quantization of the Norm]\label{th:norm}
	The norm of a $p$-form  $\phi$ in  no simply connected manifold can be expressed from two integrals based on its sources, $d\phi$ and $\delta\phi$,  and from a discrete sum of values corresponding to every one of the $\beta_{p}$  cycles of the cohomology classes as follows:
	\begin{eqnarray}
	(\phi,\phi) &= &(\alpha,\delta\phi)+(\beta,d\phi) +
	\sum_{a=1}^{\beta_{p}}  \varepsilon^{(p)}_{a,P(a)} u_{a} v_{P(a)} 
	 \\
	 &&u_{a}=\int_{z_{a}^{(p)}}\phi \qquad    v_{a}=\int_{z^{p}_{a}}\hodgeop\phi 
	\end{eqnarray}
\end{theorem}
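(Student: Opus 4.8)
The plan is to assemble the statement directly from the results already proven, essentially as a corollary of Propositions \ref{prop:norma1}, \ref{prop:norma4} and \ref{prop:norma2}. First I would write the Hodge decomposition of $\phi$ as in Theorem \ref{th:simpledecomposicion1}, with $\delta\alpha=0$ and $d\beta=0$, and simultaneously the decomposition of $\hodgeop\phi$ as in Proposition \ref{prop:simpledecomposicion2}, introducing the coefficients $u_{a}=\int_{z^{(p)}_{a}}\phi$ and $v_{a}=\int_{z^{(n-p)}_{a}}\hodgeop\phi$. By Proposition \ref{prop:norma1} the norm splits into the sum of four diagonal contributions, $(\phi,\phi)=(d\alpha,d\alpha)+(\delta\beta,\delta\beta)+(\phi_{h},\phi_{h})+(\phi_{0},\phi_{0})$, since every cross term vanishes by the adjointness $(dC,B)=(C,\delta B)$, by $d\gamma^{(p)}_{a}=\delta\gamma^{(p)}_{a}=0$, and by the fact that $\phi_{0}$ is non-cohomologous on every cycle.

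Next I would rewrite each surviving term in the form demanded by the statement. For the exact part, Proposition \ref{prop:norma4} gives $(d\alpha,d\alpha)=(\alpha,\delta\phi)$, using $\delta\phi=\delta d\alpha$ together with $(dC,B)=(C,\delta B)$; symmetrically $(\delta\beta,\delta\beta)=(\beta,d\phi)$. For the harmonic part I would invoke Proposition \ref{prop:norma2}, which expands $(\phi_{h},\phi_{h})=\int_{M}\phi_{h}\wedge\hodgeop\phi_{h}$ by substituting $\phi_{h}=\sum_{a}u_{a}\gamma^{(p)}_{a}$ and $\hodgeop\phi_{h}$'s harmonic component $\sum_{b}v_{b}\gamma^{(n-p)}_{b}$, so that the double sum collapses to $\sum_{a}\varepsilon^{(p)}_{a,P(a)}u_{a}v_{P(a)}$ because $\mathbf{E}^{(p)}$ has exactly one non-null entry per row, at column $P(a)$. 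Finally the residual term $(\phi_{0},\phi_{0})$ is simply carried over unchanged. Substituting these three rewritings into the four-term split yields exactly the claimed identity.

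The main obstacle, and the point that deserves the most care, is the identification of the harmonic piece of $\hodgeop\phi$ with the coefficients $v_{a}$: one must be sure that when $\hodgeop\phi$ is decomposed in the $\gamma^{(n-p)}$ basis, the cohomology coefficients are precisely $v_{a}=\int_{z^{(n-p)}_{a}}\hodgeop\phi$ and that these are the same $v_{a}$ appearing in $(\phi_{h},\phi_{h})$ via Proposition \ref{prop:norma2}. This is guaranteed by the normalization $\int_{z^{(n-p)}_{b}}\gamma^{(n-p)}_{a}=\delta_{ab}$ in Equation \eqref{eq:normalization1}, but it hinges on the forms $\gamma$ being strong harmonic; in the pseudo-Riemann setting one should note that the argument uses only $d\gamma=\delta\gamma=0$ and the de Rham pairing, never positive-definiteness, so it goes through. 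A secondary subtlety is that the decomposition of $\hodgeop\phi$ need not be the Hodge dual of the decomposition of $\phi$ term-by-term unless the $\hodgeop H^{p}\cong H^{n-p}$ relationship of Proposition \ref{prop:dualgamma} is used; here, however, we only need the independently valid decomposition from Proposition \ref{prop:simpledecomposicion2}, so no isomorphism hypothesis is required. With these points checked, the proof is a direct concatenation of the cited propositions.
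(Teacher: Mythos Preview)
Your proposal is correct and is exactly the route the paper takes: the theorem is stated without a separate proof because it is the immediate concatenation of Propositions~\ref{prop:norma1}, \ref{prop:norma4} and \ref{prop:norma2}, which you invoke in the same order and for the same purpose. Your additional remarks on the identification of the $v_{a}$ coefficients and on not needing the $\hodgeop H^{p}\cong H^{n-p}$ isomorphism are accurate and go slightly beyond what the paper spells out.
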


\section{Even-dimensional pseudo-Riemann manifolds}\label{sec:even}

Some special results can be obtained in even-dimensional pseudo-Riemann manifolds, where $n=2m$. In this case the \poincare duality for $p=m$ defines one endomorphism:  $H^{m} \rightarrow H^{m}$ where it is verified that for every representative form $\gamma^{(m)}_{a}$ exists one and only one $\gamma^{(m)}_{b}$ verifying:
\begin{equation}
\varepsilon^{(m)}_{ab}= \int_{M} \gamma^{(m)}_{a}\wedge\gamma^{(m)}_{b} \ne 0
\end{equation}

Therefore the Betti numbers $\beta_{m}$ are even in even $n=2m$ manifolds. 
Based on Proposition \ref{prop:simetria}, it must be:
\begin{equation}
\varepsilon^{(m)}_{ab}=  (-1)^{m^{2}}  \varepsilon^{(m)}_{ba} 
\end{equation}

For even $m$, the matrix $\mathbf{E}^{(m)}$ is symmetric, while for odd values it is antisymmetric. 

\begin{proposition}\label{prop:casoreal}
	In even dimensional manifolds, $n=2m$, the values of $\beta_{m}D(m)$ is even, therefore is verified that: $\boldsymbol{T}^{(m)}\in M(\beta_{m},\mathbb{R})$.
\end{proposition}

\begin{proof}
	The result of Proposition \ref{prop:productmatrix} applied to  $\boldsymbol{T}^{(m)}$ implies: $\boldsymbol{T}^{(m)}\times\boldsymbol{T}^{(m)}=(-1)^{D(m)}\, \boldsymbol{I}$. Thus, its determinant value is: $|\boldsymbol{T}^{(m)}|^{2}= (-1)^{\beta_{m}D(m)}=1$. It means that the determinant always verifies:  $|\boldsymbol{T}^{(m)}|= \pm 1$. With  $\boldsymbol{T}^{(m)}\in M(\beta_{m},\mathbb{R})$ the determinant is always real. In real domain, $\boldsymbol{T}^{(m)}$  belongs  to the GL($\beta_{m}$,$\mathbb{R}$) matrix Lie Group. 	
\end{proof}

\begin{remark}
	The $GL(n,\mathbb{R})$ Lie Group has two connected components different on the determinant sign\cite
	{Gorbatsevich:1997}. For the positive determinant, there is the subgroup $GL^{+}(n,\mathbb{R})$ while the negative determinant matrices do not form a Lie Group. It fails in the closure property. Additionally, the matrix must verify that its square must be the identity matrix with a positive sign.     
\end{remark}

\begin{proposition}\label{prop:restricciones}
	The matrices $\boldsymbol{E}^{(m)}$  and $\boldsymbol{\Lambda}^{(m)}$ verify the following constraint Equation:
	\begin{equation}
	\boldsymbol{\Lambda}^{(m)} \times \left( \boldsymbol{E}^{(m)}\right)^{-1} \times\boldsymbol{\Lambda}^{(m)} = (-1)^{D(m)} \boldsymbol{E}^{(m)}
	\end{equation} 
\end{proposition}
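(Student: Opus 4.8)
The plan is to reduce the claimed identity to a short matrix manipulation by feeding in the two structural relations for the auxiliary matrices that have already been established in the middle dimension, namely Proposition \ref{prop:ecuacionET} and Proposition \ref{prop:productmatrix}, both specialized to $p=m$ (which is legitimate since here $n-p=p=m$).

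First I would record the specializations. Proposition \ref{prop:ecuacionET} with $p=n-p=m$ reads $\boldsymbol{E}^{(m)}\,(\boldsymbol{T}^{(m)})^{T}=\boldsymbol{\Lambda}^{(m)}$, and since $\boldsymbol{E}^{(m)}$ is nonsingular (it has exactly one non-null entry in each row and column by \poincare duality, cf. Equation \eqref{eq:firstdefinition}) this may also be written $(\boldsymbol{T}^{(m)})^{T}=(\boldsymbol{E}^{(m)})^{-1}\boldsymbol{\Lambda}^{(m)}$. Proposition \ref{prop:productmatrix} with $p=m$ gives $\boldsymbol{T}^{(m)}\times\boldsymbol{T}^{(m)}=(-1)^{D(m)}\,\boldsymbol{I}$.

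Next I would substitute $\boldsymbol{\Lambda}^{(m)}=\boldsymbol{E}^{(m)}(\boldsymbol{T}^{(m)})^{T}$ into the left-hand side of the asserted equation. The inner pair cancels, leaving $\boldsymbol{\Lambda}^{(m)}(\boldsymbol{E}^{(m)})^{-1}\boldsymbol{\Lambda}^{(m)}=\boldsymbol{E}^{(m)}(\boldsymbol{T}^{(m)})^{T}(\boldsymbol{E}^{(m)})^{-1}\boldsymbol{E}^{(m)}(\boldsymbol{T}^{(m)})^{T}=\boldsymbol{E}^{(m)}\big((\boldsymbol{T}^{(m)})^{T}\big)^{2}$. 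Using $(AB)^{T}=B^{T}A^{T}$ one has $\big((\boldsymbol{T}^{(m)})^{T}\big)^{2}=\big((\boldsymbol{T}^{(m)})^{2}\big)^{T}=\big((-1)^{D(m)}\boldsymbol{I}\big)^{T}=(-1)^{D(m)}\boldsymbol{I}$, so the left-hand side equals $(-1)^{D(m)}\boldsymbol{E}^{(m)}$, which is exactly the right-hand side.

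I expect no real obstacle here: the only things to check are that the two cited propositions genuinely apply at $p=m$ (they do, since $n-p=p$ in this case) and that the transpose bookkeeping is handled correctly. It is worth noting that the symmetry or antisymmetry of $\boldsymbol{E}^{(m)}$ and $\boldsymbol{\Lambda}^{(m)}$ recorded earlier in this section is not needed for this particular identity; it is a pure consequence of the two matrix relations together with the invertibility of $\boldsymbol{E}^{(m)}$.
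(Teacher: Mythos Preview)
Your proof is correct and follows essentially the same approach as the paper: both derive the identity from Propositions \ref{prop:productmatrix} and \ref{prop:ecuacionET} specialized to $p=m$, using the invertibility of $\boldsymbol{E}^{(m)}$. The paper's proof only records the solved form $\boldsymbol{T}^{(m)} = (\boldsymbol{\Lambda}^{(m)})^{T}((\boldsymbol{E}^{(m)})^{-1})^{T}$ and leaves the substitution implicit, whereas you spell out the cancellation and the transpose step explicitly; the underlying argument is the same.
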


\begin{proof}
	It is a consequence of Propositions \ref{prop:productmatrix} and \ref{prop:ecuacionET}, with:
	\begin{equation}\label{eq:solucionT}
	\boldsymbol{T}^{(m)} = (\boldsymbol{\Lambda}^{(m)})^{T}\times ((\boldsymbol{E}^{(m)})^{-1})^{T}
	\end{equation}
\end{proof}

\begin{proposition}
	In  an even $n=2m$ pseudo-Riemann manifold,  $\phi\in\mathcal{A}^{m}$ can be expressed as:
	\begin{equation}\label{eq:decomp_m}
	\phi = d\alpha - \hodgeop  d\beta + \sum_{i=1}^{\beta_{m}}w_{a}\gamma^{(m)}_{a} \qquad \delta\alpha=\delta\beta=0
	\end{equation}
	
	where $\alpha\in \mathcal{A}^{m-1}$ and also $\beta\in \mathcal{A}^{m-1}$.
\end{proposition}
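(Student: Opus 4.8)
The plan is to reduce to the Hodge decomposition already established in Theorem~\ref{th:simpledecomposicion1} for the case $p=m$, and then rewrite the co-exact piece, which there appears as $\delta\beta'$ with $\beta'$ an $(m+1)$-form, in the stated shape $-\hodgeop d\beta$ with $\beta$ an $(m-1)$-form. This conversion is available precisely because $n=2m$ makes the Hodge star a linear isomorphism $\hodgeop:\mathcal{A}^{m+1}\to\mathcal{A}^{m-1}$, so that the ``source'' of the co-exact term can be transported from degree $m+1$ down to degree $m-1$.

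First I would invoke Theorem~\ref{th:simpledecomposicion1} to write $\phi = d\alpha + \delta\beta' + \sum_{a=1}^{\beta_m} w_a\gamma^{(m)}_a + \phi_0$, where $\alpha\in\mathcal{A}^{m-1}$ with $\delta\alpha=0$, $\beta'\in\mathcal{A}^{m+1}$ with $d\beta'=0$, and $\phi_0$ the strong-harmonic residue; the exact term $d\alpha$ is already of the required type and the cohomological sum is left untouched. Next I would expand the coderivative by its definition, $\delta\beta' = (-1)^{C(m+1)}\hodgeop d\hodgeop\beta'$, observe that $\hodgeop\beta'\in\mathcal{A}^{n-m-1}=\mathcal{A}^{m-1}$, and set $\beta := (-1)^{C(m+1)+1}\,\hodgeop\beta'\in\mathcal{A}^{m-1}$, which yields $\delta\beta' = -\hodgeop d\beta$ identically. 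The gauge condition $\delta\beta=0$ then follows from $d\beta'=0$: since $\beta$ is a scalar multiple of $\hodgeop\beta'$, one computes $\delta\hodgeop\beta' = (-1)^{C(m-1)}\hodgeop d\hodgeop\hodgeop\beta' = (-1)^{C(m-1)+D(m+1)}\hodgeop d\beta' = 0$. Substituting back gives $\phi = d\alpha - \hodgeop d\beta + \sum_{a=1}^{\beta_m} w_a\gamma^{(m)}_a$ with $\delta\alpha=\delta\beta=0$, which is the claim.

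The routine part is the bookkeeping of the sign exponents $C(\cdot)$ and $D(\cdot)$; the only conceptual point worth flagging is the residue. The form $\phi_0$ supplied by Theorem~\ref{th:simpledecomposicion1} is strong harmonic but non-cohomologous, hence neither exact nor co-exact, so it cannot be absorbed into $d\alpha$ or into $-\hodgeop d\beta$. Accordingly the displayed identity is to be understood modulo $\phi_0$, in exactly the same sense — and subject to the same discarding convention discussed in the remark following Theorem~\ref{th:simpledecomposicion1} — as the Hodge decomposition from which it is derived.
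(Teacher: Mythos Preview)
Your argument is correct and follows essentially the same route as the paper: start from the Hodge decomposition of Theorem~\ref{th:simpledecomposicion1}, expand $\delta\theta=(-1)^{C(m+1)}\hodgeop d\hodgeop\theta$, and absorb the sign into the definition of $\beta\in\mathcal{A}^{m-1}$ so that $\delta\theta=-\hodgeop d\beta$ with the gauge condition $d\theta=0$ becoming $\delta\beta=0$. Your treatment is in fact slightly more thorough than the paper's, since you spell out the verification of $\delta\beta=0$ and explicitly flag that the residue $\phi_0$ is being suppressed in the displayed formula.
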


\begin{proof}
	Form the Hodge decomposition:
	\begin{equation}\label{eq:decomp_m2}
	\phi = d\alpha + \delta\theta + \sum_{i=1}^{\beta_{m}}w_{a}\gamma^{(m)}_{a} \qquad \delta\alpha=0 \quad d\theta=0
	\end{equation} 
	
	where $\theta \in \mathcal{A}^{m+1}$. From the definition of the operator $\delta$: as: $\delta\theta = (-1)^{C(m+1)}\hodgeop d\hodgeop\theta$, and rewriting: $(-1)^{C(m+1)}\hodgeop\theta  = -\beta$, therefore the gauge constraint $d\theta=0$ must be rewritten as: $\delta\beta=0$. 
	The minus sign in the $\beta$ term is arbitrary but highly convenient. 
\end{proof}

\begin{proposition}
	The $m$-form $\phi$ and its dual can be expressed as follows: 
	\begin{eqnarray}
	\phi          &=& d\alpha  - \hodgeop d\beta +  \phi_{h} \label{eq:forma1} \\
	\hodgeop\phi &=& \hodgeop d\alpha  +(-1)^{D(m)+1} d\beta + \hodgeop  \phi_{h} \label{eq:forma2}
	\end{eqnarray}
\end{proposition}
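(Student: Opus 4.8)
The plan is to obtain the first identity directly from the preceding proposition and to derive the second one by applying the Hodge star operator termwise.

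First I would set $\phi_{h} = \sum_{a=1}^{\beta_{m}} w_{a}\gamma^{(m)}_{a}$. By the previous proposition this is exactly the strong-harmonic, cohomologous component of $\phi$, since each $\gamma^{(m)}_{a}\in W^{m}(M)$ and $\phi_{h}\in\mathcal{H}^{m}\subseteq W^{m}$; hence Equation \eqref{eq:forma1} is merely a rewriting of the decomposition \eqref{eq:decomp_m} under the gauge conditions $\delta\alpha=\delta\beta=0$, with $\alpha,\beta\in\mathcal{A}^{m-1}$.

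Next I would apply $\hodgeop$ to both sides of \eqref{eq:forma1}, using linearity of the Hodge operator, to get $\hodgeop\phi = \hodgeop d\alpha - \hodgeop\hodgeop d\beta + \hodgeop\phi_{h}$. Because $\beta\in\mathcal{A}^{m-1}$, the form $d\beta$ lies in $\mathcal{A}^{m}$, so the double-duality relation $\hodgeop\hodgeop\psi = (-1)^{D(m)}\psi$ valid on $\mathcal{A}^{m}$ (with $D(m)=m^{2}+s$, and note $D(m)=D(n-m)$) gives $\hodgeop\hodgeop d\beta = (-1)^{D(m)} d\beta$, whence $-\hodgeop\hodgeop d\beta = (-1)^{D(m)+1} d\beta$. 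Substituting this into the previous line yields Equation \eqref{eq:forma2} directly.

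The only point that requires any care — and it is really bookkeeping rather than a genuine obstacle — is the exponent of the sign: one must use $D(m)$, the value of $D$ at the middle degree $m$ (equivalently at $n-m=m$), because the star is being applied to the $m$-form $d\beta$, not to the $(m-1)$-form $\beta$ itself. No positivity of the norm, and no structural hypothesis on the matrices $\boldsymbol{E}^{(m)}$ or $\boldsymbol{T}^{(m)}$, is needed here: the statement is purely the decomposition of the preceding proposition combined with the identity $\hodgeop\hodgeop\phi = (-1)^{D(p)}\phi$, so $\hodgeop d\alpha$ and $\hodgeop\phi_{h}$ are simply carried along unchanged as the two other terms of the dual decomposition.
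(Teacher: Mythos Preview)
Your proof is correct and is exactly the argument the paper intends: the proposition is stated in the paper without an explicit proof, being an immediate consequence of the preceding decomposition \eqref{eq:decomp_m} together with the double-duality identity $\hodgeop\hodgeop\psi=(-1)^{D(m)}\psi$ applied to the $m$-form $d\beta$. Your bookkeeping of the sign $-\hodgeop\hodgeop d\beta=(-1)^{D(m)+1}d\beta$ is precisely what is needed, and no further input is required.
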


A more compact expression of the previous equations can be achieved by introducing the following matrix representation:
\begin{equation}\label{eq:compacta}
\left [  \begin{array}{r}
\phi \\
\hodgeop\phi
\end{array}\right]=
\left[\boldsymbol\sigma_{1} d
+ \boldsymbol\sigma_{2} (\hodgeop  d)\right] 
\left [  \begin{array}{c}
\alpha \\
\beta
\end{array}\right] + 
\sum_{a=1}^{\beta_{m}} 
\left [  \begin{array}{r}
u_{a} \\
v_{a}
\end{array}\right]
\gamma^{(m)}_{a}
\end{equation}

where: $u_{a}=\int_{z^{(m)}_{a}} \phi$ and $v_{a}=\int_{z^{(m)}_{a}}\hodgeop\phi$. The  $2\times2$ matrices $\boldsymbol\sigma_{1}$ and $\boldsymbol\sigma_{2}$ are:
\begin{equation}
\boldsymbol\sigma_{1} = \left ( \begin{array}{cc}
1 & 0 \\
0 & (-1)^{D(m)+1}
\end{array}\right ) \qquad \boldsymbol\sigma_{2} = \left ( \begin{array}{cc}
0 & -1 \\
1 & 0
\end{array}\right )
\end{equation}

For odd $D(m)$ both $\boldsymbol\sigma_{1}$ and $\boldsymbol\sigma_{2}$ define  the rotation matrix in the $SO(2)$ group: $\mathbf{R}(\xi)=\boldsymbol\sigma_{1}\cos\xi+\boldsymbol\sigma_{2}\sin\xi$.  The operator coderivative applied to this compact representation verifies:
\begin{equation}
\delta
\left [ 
\begin{array}{c}
\alpha \\
\beta
\end{array}
\right] = 0
\qquad
\delta
\left [  \begin{array}{r}
\phi \\
\hodgeop\phi
\end{array}\right] 
= \boldsymbol\sigma_{1} 
\Laplace
\left [  
\begin{array}{c}
\alpha \\
\beta
\end{array}
\right]
\end{equation}

\begin{proposition}\label{prop:norm_m}
	The norm of the $m$-form $\phi$ is decomposed as:
	\begin{eqnarray}
	(\phi,\phi) & = & (\alpha,\delta\phi)  + (-1)^{s} 
	(\beta, \delta\hodgeop\phi) \\ &&+
	\sum_{a=1}^{\beta_{m}}\varepsilon^{(m)}_{a,P(a)}u_{a}v_{P(a)} 
	\end{eqnarray}
\end{proposition}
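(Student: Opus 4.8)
The plan is to mimic the proof of Proposition~\ref{prop:norma1} and Theorem~\ref{th:norm}, but using the special $m$-form decomposition \eqref{eq:forma1}--\eqref{eq:forma2} instead of the generic one, so that the two continuous terms get expressed through $\delta\phi$ and $\delta\hodgeop\phi$ rather than $\delta\phi$ and $d\phi$. First I would write $(\phi,\phi)=\int_M\phi\wedge\hodgeop\phi$ and substitute the decomposition of $\phi$ from \eqref{eq:forma1} together with the decomposition of $\hodgeop\phi$ from \eqref{eq:forma2}; this produces a $3\times 3$ array of terms (two continuous, one cohomological, plus the residue cross-terms). The cross terms between the continuous part, the cohomological part and the residue $\phi_0$ all vanish by exactly the same arguments as in Proposition~\ref{prop:norma1}: $(dC,B)=(C,\delta B)$ kills the $d\alpha$--versus--exact/coexact and $d\alpha$--versus--$\gamma$ pairings, closedness/co-closedness of $\gamma^{(m)}$ kills the cohomological cross terms, and the already-proved orthogonality $(\phi_0,\gamma^{(m)}_a)=0$ together with $d\phi_0=\delta\phi_0=0$ and $\int_{z}\phi_0=0$ disposes of the residue cross terms. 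So only the four diagonal contributions survive: $(d\alpha,\hodgeop d\alpha$-part$)$, the $\hodgeop d\beta$ term, the harmonic term $(\phi_h,\phi_h)$, and $(\phi_0,\phi_0)$.

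Next I would identify each surviving term. For the first, note that the $d\alpha$ term of $\phi$ pairs with the $\hodgeop d\alpha$ term of $\hodgeop\phi$, giving $\int_M d\alpha\wedge\hodgeop d\alpha=(d\alpha,d\alpha)$, and by Proposition~\ref{prop:norma4} this equals $(\alpha,\delta\phi)$, using $\delta\phi=\delta d\alpha$ (which holds since $\delta$ annihilates the $\hodgeop d\beta$ and $\phi_h$ pieces — this is consistent with the compact identity $\delta[\phi,\hodgeop\phi]^T=\boldsymbol\sigma_1\Laplace[\alpha,\beta]^T$ displayed just above the statement). For the second term, the $-\hodgeop d\beta$ piece of $\phi$ pairs with the $(-1)^{D(m)+1}d\beta$ piece of $\hodgeop\phi$, so the contribution is $(-1)^{D(m)+1}(-1)\int_M \hodgeop d\beta\wedge d\beta=(-1)^{D(m)}\int_M \hodgeop d\beta\wedge d\beta$. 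I would then rewrite $\int_M \hodgeop d\beta\wedge d\beta$ using $A\wedge\hodgeop B=B\wedge\hodgeop A$ and $\hodgeop\hodgeop=(-1)^{D(m+1)}$ on $(m+1)$-forms to turn it into $\pm(d\beta,d\beta)$, and apply Proposition~\ref{prop:norma4}-type manipulation $(d\beta,d\beta)=(\beta,\delta d\beta)$; finally $\delta d\beta$ should be re-expressed as $\delta\hodgeop\phi$ up to the sign $(-1)^s$, which is where the stated coefficient $(-1)^s$ comes from. The harmonic term is handled verbatim by Proposition~\ref{prop:norma2}: $(\phi_h,\phi_h)=\sum_a\varepsilon^{(m)}_{a,P(a)}u_a v_{P(a)}$ with $u_a=\int_{z^{(m)}_a}\phi$, $v_a=\int_{z^{(m)}_a}\hodgeop\phi$. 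The residue gives $(\phi_0,\phi_0)$ untouched.

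The main obstacle is bookkeeping the signs in the second term: one must carefully track the factor $(-1)^{D(m)+1}$ from $\boldsymbol\sigma_1$, the sign from moving $\hodgeop d\beta$ past $d\beta$ in the wedge, a double-Hodge sign $(-1)^{D(m+1)}$ acting on an $(m+1)$-form (note $D(m+1)=(m+1)(m-1)+s=m^2-1+s$, so $D(m+1)$ has opposite parity to $D(m)=m^2+s$ when... one checks: $D(m+1)-D(m)=-1$, odd, so they always differ in parity), and the sign produced when converting $\delta d\beta$ into $\delta\hodgeop\phi$ via $\hodgeop\phi=\hodgeop d\alpha+(-1)^{D(m)+1}d\beta+\hodgeop\phi_h$ — applying $\delta$ gives $\delta\hodgeop\phi=(-1)^{D(m)+1}\delta d\beta$, hence $(\beta,\delta d\beta)=(-1)^{D(m)+1}(\beta,\delta\hodgeop\phi)$. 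Combining all the accumulated factors the claim is that they collapse to exactly $(-1)^s$; I would verify this by evaluating the product of powers of $-1$ modulo $2$ using $D(m)\equiv m^2+s$, and I expect the $m$-dependence to cancel, leaving $(-1)^s$. If a stray $m$-dependent sign refuses to cancel, the fallback is to absorb it into the (arbitrary) sign convention chosen for $\beta$ in \eqref{eq:decomp_m}, which the preceding proposition explicitly notes is at our disposal.
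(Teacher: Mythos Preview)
Your strategy is sound but takes a different, more laborious route than the paper. The paper does not re-expand $\int_M\phi\wedge\hodgeop\phi$ via \eqref{eq:forma1}--\eqref{eq:forma2} from scratch; instead it simply invokes the already-proved generic norm decomposition of Theorem~\ref{th:norm},
\[
(\phi,\phi)=(\alpha,\delta\phi)+(\theta,d\phi)+\sum_a\varepsilon^{(m)}_{a,P(a)}u_av_{P(a)}+(\phi_0,\phi_0),
\]
with $\theta\in\mathcal{A}^{m+1}$ the form appearing in the standard Hodge splitting $\phi=d\alpha+\delta\theta+\cdots$, and then performs the change of variable $\theta\to\beta$ dictated by $\delta\theta=-\hodgeop d\beta$. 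Writing $\theta=\pm\hodgeop\beta$ with the sign fixed by $C(m+1)+D(m+1)+1\equiv m^{2}$, the paper tracks the sign through the chain $(\theta,d\phi)\to(\hodgeop\beta,d\phi)\to(\beta,\hodgeop d\phi)\to(\beta,\delta\hodgeop\phi)$ and arrives at the factor $(-1)^{s}$. Reusing Theorem~\ref{th:norm} in this way avoids redoing the cross-term orthogonality analysis that you outline and keeps the argument to a couple of lines; your direct expansion buys independence from that earlier theorem at the cost of repeating its work.

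One slip in your bookkeeping: you invoke ``$\hodgeop\hodgeop=(-1)^{D(m+1)}$ on $(m+1)$-forms'' for $d\beta$, but in the decomposition \eqref{eq:decomp_m} one has $\beta\in\mathcal{A}^{m-1}$, so $d\beta$ is an $m$-form and the relevant double-Hodge sign is $(-1)^{D(m)}$, not $(-1)^{D(m+1)}$. This does not break your plan, but it shifts the sign arithmetic you sketched; your stated fallback of absorbing any residual sign into the arbitrary convention for $\beta$ remains available.
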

\begin{proof}
	From the Hodge decomposition: $\phi = d\alpha + \delta\theta + \cdots$, such as its norm is: $(\phi,\phi)=(\alpha,\delta\phi) + (\theta,d\phi) + \cdots$. The second term has been modified as: $\delta\theta = - *d\beta$, that is equivalent to: $(-1)^{C(m+1)}\hodgeop\theta  = -\beta$, therefore:  $\theta  = (-1)^{C(m+1)+D(m+1)+1}\hodgeop\beta$. But by erasing the even terms:
	\begin{equation}
	C(m+1)+D(m+1)+1= m^{2}
	\end{equation}
	
	Hence, the second term of the norm must be changed as:
	\begin{eqnarray}
	(\theta,d\phi) &=& (-1)^{m^2}(\hodgeop\beta,d\phi)= (-1)^{m^2}(\beta,\hodgeop d\phi) \\ & = & (-1)^{m^2+D(m)}(\beta,\delta\hodgeop\phi) \\ &=& (-1)^{s}(\beta,\delta\hodgeop\phi)
	\end{eqnarray}
\end{proof}

\begin{proposition}\label{prop:relacionCompacta}
	It is verified the following compact relationship between topological integrals:
	\begin{equation}
	\left [  \begin{array}{c}
	u_{a} \\
	v_{a}
	\end{array}\right] =
	\sum_{b=1}^{\beta_{m}} 
	\tau^{(m)}_{ba}
	\left [  \begin{array}{cc}
	0 & (-1)^{D(m)}\\
	1 & 0
	\end{array}\right]
	\left [  \begin{array}{c}
	u_{b} \\
	v_{b}
	\end{array}\right]
	\end{equation}
\end{proposition}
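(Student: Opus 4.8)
The plan is to start from the two decompositions already established: the $m$-form $\phi$ expands on the $\gamma^{(m)}_{a}$ with coefficients $u_{a}=\int_{z^{(m)}_{a}}\phi$ (Theorem \ref{th:simpledecomposicion1}), and its Hodge dual $\hodgeop\phi$ expands on the same set $\gamma^{(m)}_{a}$ with coefficients $v_{a}=\int_{z^{(m)}_{a}}\hodgeop\phi$ (Proposition \ref{prop:simpledecomposicion2}, noting that in the even case $p=n-p=m$ so both expansions live in $H^{m}(M)$). The target identity is exactly the $p=m$ specialization of Proposition \ref{prop:bothintegrals}, repackaged as a single $2\times2$ matrix equation. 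So the first step is simply to write down the two scalar relations from Proposition \ref{prop:bothintegrals} with $p=m$:
\begin{eqnarray}
u_{a}=\int_{z^{(m)}_{a}}\phi &=& \sum_{b=1}^{\beta_{m}}(-1)^{D(m)}\tau^{(m)}_{ba}\int_{z^{(m)}_{b}}\hodgeop\phi = \sum_{b}(-1)^{D(m)}\tau^{(m)}_{ba}\,v_{b}\\
v_{a}=\int_{z^{(m)}_{a}}\hodgeop\phi &=& \sum_{b=1}^{\beta_{m}}\tau^{(m)}_{ba}\int_{z^{(m)}_{b}}\phi = \sum_{b}\tau^{(m)}_{ba}\,u_{b}
\end{eqnarray}

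\textbf{Assembling the matrix form.} The second step is purely bookkeeping: stack the two scalar equations into a column vector $(u_{a},v_{a})^{T}$ and factor out the common summation $\sum_{b}\tau^{(m)}_{ba}$. The first equation pulls $v_{b}$ with weight $(-1)^{D(m)}$ and the second pulls $u_{b}$ with weight $1$; reading this as a linear action on $(u_{b},v_{b})^{T}$ gives precisely the matrix
\[
\begin{pmatrix} 0 & (-1)^{D(m)}\\ 1 & 0 \end{pmatrix},
\]
which is the claimed kernel. Hence
\[
\left[\begin{array}{c} u_{a}\\ v_{a}\end{array}\right]
= \sum_{b=1}^{\beta_{m}}\tau^{(m)}_{ba}
\left[\begin{array}{cc} 0 & (-1)^{D(m)}\\ 1 & 0\end{array}\right]
\left[\begin{array}{c} u_{b}\\ v_{b}\end{array}\right],
\]
which is the statement. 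One should double-check the index placement on $\tau^{(m)}_{ba}$ (the first index runs over the set being dualized, the second over the cycles), but since we are only quoting Proposition \ref{prop:bothintegrals} verbatim with $p=m$, the indices are inherited correctly.

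\textbf{Consistency check and the only subtle point.} The main thing worth verifying — and the place where an error could hide — is internal consistency: applying the matrix relation twice should be the identity. Squaring the kernel matrix gives $(-1)^{D(m)}\boldsymbol{I}$, and iterating the relation produces the double sum $\sum_{b,c}\tau^{(m)}_{cb}\tau^{(m)}_{ba}(-1)^{D(m)}$ acting as the identity on $(u_{a},v_{a})^{T}$; this forces $\sum_{b}(-1)^{D(m)}\tau^{(m)}_{cb}\tau^{(m)}_{ba}=\delta_{ca}$, i.e. $\boldsymbol{T}^{(m)}\times\boldsymbol{T}^{(m)}=(-1)^{D(m)}\boldsymbol{I}$, which is exactly Proposition \ref{prop:productmatrix} specialized to $p=m$. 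So the compact relation is not only a consequence of the earlier results but is equivalent to them in the even case; the proof therefore reduces entirely to the repackaging described above, and the only ``obstacle'' is keeping the transpose/index conventions of $\boldsymbol{T}^{(m)}$ straight between the scalar and matrix forms.
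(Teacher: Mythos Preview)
Your proof is correct and follows exactly the approach of the paper, which simply states that the result ``is based on Proposition \ref{prop:bothintegrals}''; you have merely written out in detail the specialization $p=n-p=m$ and the stacking into a $2\times2$ matrix that the paper leaves implicit. The consistency check you add (squaring the relation to recover $\boldsymbol{T}^{(m)}\times\boldsymbol{T}^{(m)}=(-1)^{D(m)}\boldsymbol{I}$) is also precisely what the paper records immediately after the proposition.
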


\begin{proof}
	It is based on Proposition \ref{prop:bothintegrals}. The compatibility is just the Equation: $\boldsymbol{T}^{(m)}\times\boldsymbol{T}^{(m)}=(-1)^{D(m)}\, \boldsymbol{I}$.
\end{proof}

\begin{remark}\label{rmk:cuadratura}
	For odd $D(m)$ the matrix in right side on the previous Proposition is a rotation of $\pi/2$ in a plane.  The following complex representation can be introduced: $w = u+v\imath$, and the previous Equation becomes:
	\begin{equation}
	w_{a} =\imath
	\sum_{b=1}^{\beta_{m}} 
	\tau^{(m)}_{ba}
	w_{b} 
	\end{equation}
\end{remark}

\subsection{Solving the case $\beta_{m}=2$}\label{subsec:ejemplosbasicos}

The simplest solution is for $\beta_{m}=2$. The solutions of matrices will be designed by using simple coding criteria as S$\beta_{m}$.X.Y.. according to subcases that can be found in the analysis. 

Several procedures can be defined based in the relations of matrices $\boldsymbol{\Lambda}^{(m)})$, $\mathbf{E}^{(m)}$ and $\mathbf{T}^{(m)}$. The initial data can be the matrix $\mathbf{E}^{(m)}$ with the symmetries and \poincare duality relationship between cohomology classes. Using the result  of Proposition \ref{prop:restricciones} can be obtained the compatible values for $\boldsymbol{\Lambda}^{(m)})$ matrix. Other rather simple procedure to solve the matrices for low Betti numbers can be summarized as:
\begin{enumerate}
	\item  In te Equation \eqref{eq:solucionT}, to code the right hand side based on the defined matrix $\mathbf{E}^{(m)}$  and general $\boldsymbol{\Lambda}^{(m)})$.
	\item Based on  the symmetries of $\boldsymbol{\Lambda}^{(m)})$ and symmetries/antisymmetries of  $\mathbf{E}^{(m)}$, to identify the actual different terms in that side.
	\item To code the matrix $\mathbf{T}^{(m)}$ by including only the different elements.
	\item To solve the Equation in Proposition \eqref{prop:productmatrix}; some constraints can be required. 
\end{enumerate}

The $\mathbf{E}^{(m)}$ matrix can be symmetric or antisymmetric  depending on the value of $m$. The matrices are:
\begin{equation}
\mathbf{E}^{(m)} =
\left(
\begin{array}{cc}
0 & E_{12}\\
E_{21} & 0
\end{array}
\right) 
\qquad
\mathbf{\Lambda}^{(m)} =
\left(
\begin{array}{cc}
\lambda_{11} & \lambda_{12}\\
\lambda_{21} & \lambda_{22}
\end{array}
\right)
\end{equation}

The right side of Equation \eqref{eq:solucionT} is the following:
\begin{eqnarray}
(\boldsymbol{\Lambda}^{(m)})^{T} \times ((\mathbf{E}^{(m)})^{-1})^{T}  =
\left(
\begin{array}{cc}
\lambda_{21}/E_{21} & \lambda_{11}/E_{12}\\
\lambda_{22}/E_{21} & \lambda_{12}/E_{12}
\end{array}
\right) 
\end{eqnarray}

There are thee different terms from the symmetry relationships: $A=\lambda_{12}/E_{12}$, $B=\lambda_{11}/E_{12}$ and $C= \lambda_{22}/E_{12}$.  The matrix $\mathbf{T}^{(m)}$ is coded as follows:
\begin{equation}
\mathbf{T}^{(m)} =
\left(
\begin{array}{cc}
(-1)^{m^{2}}A & B\\
(-1)^{m^{2}} C & A
\end{array}
\right)
\end{equation}

It must ve verified that:  $\mathbf{T}^{(m)} \times  \mathbf{T}^{(m)} = (-1)^{D(m)} \mathbf{I}$, that generates the following equations:
\begin{eqnarray}
A^{2}+ (-1)^{m^{2}}BC &=& (-1)^{D(m)}\\
(-1)^{m^{2}}AB+ BA &=& 0\\
CA+ (-1)^{m^{2}} AC &=& 0\\
(-1)^{m^{2}}CB+ A^{2} &=& (-1)^{D(m)}
\end{eqnarray}

Several options with different solutions can be proposed generating a taxonomy of cases based on $m$ values.  

\begin{enumerate}
	\item \textbf{Even $m$}. The solution is obtained from $AB=0$, $AC=0$ and $A^{2}+ BC =(-1)^{s}$, that also implies other two possibilities; in both cases it is verified:
	\begin{equation}
	\mathbf{E}^{(m)} =
	\left(
	\begin{array}{cc}
	0 & E_{12}\\
	E_{12} & 0
	\end{array}
	\right) 
	\end{equation}
	
	\begin{enumerate}
		\item  [S2.1] $A=0$, that implies:  $BC =(-1)^{s}$. It means that: $\lambda_{12}=0$ and $\lambda_{11} \lambda_{22} =  (-1)^{s}(E_{12})^{2}$. The solution matrices are:
		\begin{eqnarray}
		\mathbf{\Lambda}^{(m)} &=&
		\left(
		\begin{array}{cc}
		\lambda_{11} & 0\\
		0 & \lambda_{22}
		\end{array}
		\right)
		\\
		\mathbf{T}^{(m)} &=&
		\left(
		\begin{array}{cc}
		0 & \lambda_{11}/E_{12}\\
		\lambda_{22}/E_{12} & 0
		\end{array}
		\right)
		\end{eqnarray}
		
		\item [S2.2] $B=C=0$, that implies:  $A^{2} =(-1)^{s}$, it is only possible, with real matrices, in metric with even $s$. It is verified: $\lambda_{11}=\lambda_{22}=0$ and: $(\lambda_{12})^{2} =  (-1)^{s}(E_{12})^{2}$. The solution matrices are:
		\begin{eqnarray}
		\mathbf{\Lambda}^{(m)} &=&
		\left(
		\begin{array}{cc}
		0 & \lambda_{12} \\
		\lambda_{12} & 0
		\end{array}
		\right)
		\\
		\mathbf{T}^{(m)} &=&
		\left(
		\begin{array}{cc}
		\lambda_{12}/E_{12} & 0\\
		0 & \lambda_{12}/E_{12} 
		\end{array}
		\right)
		\end{eqnarray}
		
	\end{enumerate}
	
	\item  [S2.3], \textbf{Odd $m$}. The following condition must be verified: $A^{2}-BC = (-1)^{s+1}$, that implies: $(\lambda_{12})^{2} - \lambda_{11} \lambda_{22} = (-1)^{s+1}(E_{12})^{2}$. The matrices are:
	\begin{eqnarray}
	\mathbf{E}^{(m)} &=&
	\left(
	\begin{array}{cc}
	0 & E_{12}\\
	-E_{12} & 0
	\end{array}
	\right) \\
	\mathbf{T}^{(m)} &=&
	\left(
	\begin{array}{cc}
	-\lambda_{12}/E_{12} & \lambda_{11}/E_{12}\\
	-\lambda_{22}/E_{12} & \lambda_{12}/E_{12} 
	\end{array}
	\right)
	\end{eqnarray}
\end{enumerate}

\begin{table}
	\centering
	\caption{Solution taxonomy for $\beta_{m}=2$ with their required  conditions.}\label{tb:soluciones}
	\begin{tabular}{|l l||c|c|c|c|c| }\hline
		Group & Solution & $m$  & $s$  & $|\mathbf{T}^{(m)}|$ & Description \\\hline\hline
		S2 &          &      &      &                      & One dual pair\\
		& S2.1   & even &  -   &      $(-1)^{s+1}$    &\\
		& S2.2   & even & even &       +1             &\\
		& S2.3   & odd  &   -  &     $(-1)^{s}$       &\\\hline
	\end{tabular}
\end{table}


\subsection{The Cohomology Forms and Matrices in the $2$-Torus}

The 2-Torus is a surface manifold $M \subset R^{3}$ with a  riemannian metric defined  as follows,  where: $u\in[0,2\pi]$ and $v\in[0,2\pi]$
\begin{equation}\label{eq:metrictorus}
ds^{2} = (R+r\cos v)^{2}du^{2}+ r^{2}dv^{2}
\end{equation}

The metric tensor $g_{ab}$ and its determinant $|g|$ are:
\begin{equation}
\mathbf{g}=\left ( 
\begin{array}{cc}
(R+r\cos v)^{2} & 0 \\
0 & r^{2}
\end{array}
\right) \qquad \sqrt{|g|}= (R+r\cos v)r
\end{equation}

\begin{figure}[t]
	\centering
	\includegraphics[width=0.3\textwidth]{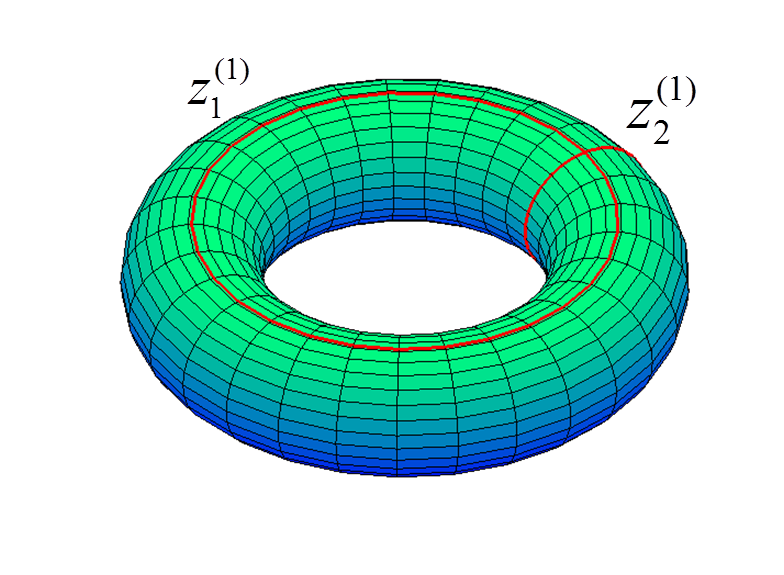}
	\caption{The $2$-Torus has two cohomologous $1$-cycles, $z^{(1)}_{1}$ and $z^{(1)}_{2}$, associated to cycles in coordinates $du$ and $dv$ respectively. Betti number is: $\beta_{1}=2$.} \label{fig:torosimple}	
\end{figure}

It is a even dimensional manifolds, $n=2$, $m=1$, pure Riemannian metric, $s=0$, $D(1)= 1$. With odd $m$ the matrix $\mathbf{E}^{(1)}$ must be antisymmetric. Table \ref{tb:figuras} shows Betti values.
The volume form is: $\Omega=\sqrt{|g|} \:du\wedge dv=(R+r\cos v)r\; du\wedge dv$. Its integral is the
manifold volume $\mathrm{vol}(M)$, in this case is the Torus surface area $A$:
\begin{eqnarray}
A &=& \mathrm{vol}(M)=\int_{M}\Omega= \int_{M} \sqrt{|g|}\;du\wedge dv \\
&=& \int_{M} (R+r\cos v)r \;du\wedge dv =Rr(2\pi)^{2}
\end{eqnarray}

\begin{table}[t]
	\centering
	\caption{Some remarkable surface manifolds $M\subset R^{3}$ with dimensionality $n=2$ and Riemannian metric with signature $s=0$. The manifold dimension is even, $n=2m$, and $m=1$ is odd, therefore it is a Solution S2.3.
		The Euler number is: $\chi =\beta_{0} -\beta_{1} +\beta_{2}$.}\label{tb:figuras}
	\begin{tabular}{|c|c|c|c|}\hline
		Surface & Euler: $\chi$ & Betti: $\beta_{0},\beta_{1},\beta_{2}$ & \poincare pairs\\\hline\hline
		Sphere & 2  & 1,0,1 & 0\\\hline       
		2-Torus  & 0  & 1,2,1 & 1\\\hline
		2-Torus g-2 & -2 & 1,4,1 & 2 \\\hline
		2-Torus g-3 & -4 & 1,6,1 & 3 \\\hline
	\end{tabular}
\end{table}

This is a manifold of finite volume, therefore the  unit $2$-form $\omega$ can be defined as: $\omega=\frac{1}{A}\Omega$. It can be expressed as:
\begin{eqnarray}
\omega &=& \frac{1}{A}\Omega=  \dfrac{1}{Rr(2\pi)^{2}}(R+r\cos v)r \;du\wedge dv \\ &=& \dfrac{du}{2\pi}\wedge \dfrac{dv}{2\pi} + d\left(\dfrac{r}{2\pi R}\sin v \;du \right)
\end{eqnarray}

The second term in the right side is an exact term $d\xi$, where $\xi= \frac{r}{2\pi R}\sin v \;du$. This allows to define an effective $\omega$ without the exact and vanishing term in the integral: 
\begin{equation}
\int_{M}\omega= \int_{M} \dfrac{du}{2\pi}\wedge \dfrac{dv}{2\pi} +\int_{M} d\xi = 1 +  \int_{\partial M} \xi =1
\end{equation}

because the manifold $M$ has not boundaries: $\partial M=\emptyset$. There are two linear independent cycles, two homology classes, therefore there are two cohomology $1$-forms, $\beta_{1}=2$, so they must be mutually \poincare duals. The two representative forms can be chosen as the related to $du$ and $dv$ respectively. These $1$-forms and its Hodge duals, that fit the unit $n$ form,  are:
\begin{alignat}{7}
&\gamma^{(1)}_{1} &=& \frac{du}{2\pi} &\qquad& \hodgeop\gamma^{(1)}_{1} &=&  \dfrac{dv}{2\pi}\\
&\gamma^{(1)}_{2} &=& \frac{dv}{2\pi} &\qquad& \hodgeop\gamma^{(1)}_{2} &=& -  \dfrac{du}{2\pi}
\end{alignat}
The norm integrals verify $(\gamma^{(1)}_{a},\gamma^{(1)}_{b})=\delta_{ab}$,  therefore $\mathbf{\Lambda}^{(1)}$ is diagonal,  the set is orthonormal that implies the Hodge and  \poincare dual must be proportional.
The elements of  matrix $\mathbf{T}^{(1)}$ are:
\begin{eqnarray}
\tau^{(1)}_{11} &=&  \int_{z^{1}_{1}} \hodgeop \gamma^{(1)}_{1} =  \int_{z^{1}_{1}} \dfrac{dv}{2\pi}= 0\\
\tau^{(1)}_{12} &=&  \int_{z^{1}_{2}} \hodgeop \gamma^{(1)}_{1} =  \int_{z^{1}_{2}} \dfrac{dv}{2\pi}= 1\\
\tau^{(1)}_{21} &=&  \int_{z^{1}_{1}} \hodgeop \gamma^{(1)}_{2} =  -\int_{z^{1}_{1}} \dfrac{du}{2\pi}= -1\\
\tau^{(1)}_{22} &=&  \int_{z^{1}_{2}} \hodgeop \gamma^{(1)}_{2} =  -\int_{z^{1}_{2}} \dfrac{du}{2\pi}= 0
\end{eqnarray}
The $\boldsymbol{T}^{(1)}$ matrix is:
\begin{equation}
\boldsymbol{T}^{(1)} = \left(  
\begin{array}{cc}
0 & 1\\
-1 & 0
\end{array}
\right) 
\qquad \boldsymbol{T}^{(1)} \times \boldsymbol{T}^{(1)} = -\boldsymbol{I}
\end{equation}
The values of the elements of matrix $\boldsymbol{E}^{(1)}$  are:
\begin{eqnarray}
\varepsilon^{(1)}_{11} &=&  \int_{M} \gamma^{(1)}_{1}\wedge\gamma^{(1)}_{1} =0 \\
\varepsilon^{(1)}_{12} &=&  \int_{M} \gamma^{(1)}_{1}\wedge\gamma^{(1)}_{2} =1 \\
\varepsilon^{(1)}_{21} &=&  \int_{M} \gamma^{(1)}_{2}\wedge\gamma^{(1)}_{1} =-1 \\
\varepsilon^{(1)}_{22} &=&  \int_{M} \gamma^{(1)}_{2}\wedge\gamma^{(1)}_{2} =0
\end{eqnarray}
The matrix $\boldsymbol{E}^{(1)}$ is:
\begin{equation}
\boldsymbol{E}^{(1)} = \left(  
\begin{array}{cc}
0 & 1\\
-1 & 0
\end{array}
\right) 
\qquad \boldsymbol{E}^{(1)}\times(\boldsymbol{T}^{(1)})^{T}  = \boldsymbol{I} 
\end{equation}

Proposition \ref{prop:restricciones} becomes verified.
This topology solution for the $2$-Torus is in the Group S2.3 in previous Subsection and Table \ref{tb:soluciones}; it has: $n=2m$, $m=1$ odd, $s=0$ even and $\beta_{1}=2$.  The constraint $(\lambda_{12})^{2} -
\lambda_{11} \lambda_{22} = (-1)^{s+1}(E_{12})^{2}$ is also verified.

\section{Application to Electromagnetism in Non-Simple Connected Manifolds}\label{sec:aplicacion}

An application is presented of the previous theoretical results for the Electromagnetic Field into a no simple connected manifold. Classic Physic, including Relativity and Electromagnetism, is mainly a theory in the simple connected manifold $\mathbb{R}^{4}$. If instead of the classic approach, the theory is developed in a $4$-dimensional no simple connected manifold, then some different results can arise due to the involved topology with its cohomologies.   

Following the Classical Theory\cite{Misner:1973,Hehl:2003,Landau:1973}, this field can be represented by using a $2$-form $F$. In this paper, it is used a general form, without the restriction of be exact. The elements of the $2$-form  $F$  include the $3$-dimensional vectors magnetic field, $\mathbf{B}$, and electric field, $\mathbf{E}/c$, where $c$ is the light speed in the vacuum; it is included to provide physical dimensional compatibility in the SI of units. It verify: $c^{2}=1/\mu_{0}\epsilon_{0}$. No material media is considered but currents and charge distributions are included. 

Let $(M,g,\gamma)$ be a $4$-dimensional compact, differentiable, oriented and connected pseudo-Riemann manifold $M$ with coordinates according 
MTW conventions\cite{Misner:1973}:  $(ct,x,y,z)$ having a metric ${g}$ locally reducible to a minkowskian  diagonal case with $s=1$, that a space-like metric: $\eta=\textrm{diag}(-1,1,1,1)$. 

Also, it must be considered $\gamma$ a set of representative cohomologous forms associated to closed sub-manifolds $z$. There are $\beta_{1}$ $1$-forms $\gamma^{(1)}$ and $\beta_{2}$ $2$-forms $\gamma^{(2)}$. This manifold is even dimensional, $n=4$, being $F$ a middle dimensional form, $m=2$. The signature $s$ is odd, and $m^{2}$ is even, therefore
the factor $D(2)$ is odd and  $D(1)=D(3)$ is even. The double Hodge duality is anti-symmetric for $2$-forms and symmetric for $1$-forms, the Betti number $\beta_{2}$  is even and the matrix $(\varepsilon^{(2)}_{ab})$ is symmetric. The cohomology representative forms are physically no dimensional as well as all the matrices generated from their. 

According the MTW conventions
the following  matrices that meet such criteria\cite
{Misner:1973}: 
\begin{equation}
(F)_{ab}= \left(
\begin{array}{cccc}
0   &   -E_{1}/c   &   -E_{2}/c & -E_{3}/c \\
E_{1}/c   &   0   &    B_{3} & -B_{2} \\
E_{2}/c   &  -B_{3}   &    0 & B_{1} \\
E_{c}/c &   B_{2} &  -B_{1} &  0 
\end{array}
\right)
\end{equation}
\begin{equation}
(\hodgeop F)_{ab}= \left(
\begin{array}{cccc}
0   &   B_{1}   &  B_{2} &  B_{3} \\
-B_{1}   &   0   &    E_{3}/c & -E_{2}/c \\
-B_{2}   &  -E_{3}/c   &    0 &  E_{1}/c \\
-B_{3} &   E_{2}/c &  -E_{1}/c &  0 
\end{array}
\right)
\end{equation}

They verify that:
\begin{eqnarray}\label{eq:desarrollo3}
\nabla_{b}F^{0b} &=&  \nabla\cdot \mathbf{E}/c \\  
\nabla_{b}F^{ab} &=&  -\frac{\partial E_{a}/c}{c\partial t} + (\nabla\times \mathbf{B})_{a} \\
\nabla_{b}(\hodgeop F)^{0b} &=&   - \nabla\cdot \mathbf{B} \\  
\nabla_{b}(\hodgeop F)^{ab} &=& \frac{\partial B_{a}}{c\partial t} + (\nabla\times \mathbf{E}/c)_{a} \label{eq:desarrollo4}
\end{eqnarray}

The contravariant components of $4$-dimensional currents vector $J^{a}$ are: $\{c\rho,J_{1},J_{2},J_{3} \}$ and the covariant ones $J_{a}$ are:  $\{-c\rho,J_{1},J_{2},J_{3}\}$. The two first Equations in the previous system are the Maxwell Equations that can be rewritten as: $\nabla_{b}F^{ab}=\mu_{0}J^{a}$, while the last two are:  $\nabla_{b}(\hodgeop F)^{ab}=0$.

However, the general $2$-form $F$ admits a Hodge decomposition, according to the results of Section \ref{sec:even}, but it is required to physically identify its sources. These are the continuous: $\delta F$, $\delta\hodgeop F$ and the discrete: $\int_{z} F$.

\subsection{The Physical meaning of Cohomology Integrals}
According to Classic Electromagnetism, the physical meaning of one integral as $\int_{z^{(2)}_{a}} \hodgeop F$ is one electric charge. However, the meaning of $\int_{z^{(2)}_{a}} F$ must be one magnetic charge, that is an unusual or exotic concept in Classical Theory, but have been largely considered in theoretical proposals of extension of the Classical Theory as Electromagnetic Duality\cite{Baez:1994, Jackson:1999, Olive:1996}. Both types previously defined are \emph{topological} charges, that is, they are associated to some integrals that can have no null values only in no simple connected manifolds. Thus, they depend mainly on the topological properties of the manifold. It must be remarked that this magnetic charge is not a free particle in the sense of Dirac magnetic monopoles~\cite{Dirac:1931}, instead this magnetic charge is only one integral, that is a global property of the manifold, one source of the field.    

The magnetic charges are theoretical concept, without experimental evidence. Thus, they have neither physical dimension nor units nor experimental values for the hypothetical charges.  There is some freedom degree in defining some of these properties, as the physical dimensionality of magnetic charges. In this paper, using these available freedom degrees, a dimensional relationship as $[q^{(E)}]=[q^{(M)}]$ is used. Remark that the chosen physical dimension implies that both are measured using the same units, but the numerical values of experimental charges, if they could exist, do not need to be the same that the electric ones. 

The Maxwell equations in Classic Electromagnetic Theory for the rationalized International System  of Units in differential form, generalization of Equations \eqref{eq:desarrollo3} to \eqref{eq:desarrollo4}, can be rewritten as:
\begin{equation}\label{eq:maxwelldual}
\delta F = \mu_{0} J^{(E)} \quad \delta \hodgeop F =  -{\mu_{0}} J^{(M)}
\end{equation}

where $J^{(E)}$ and $J^{(M)}$ are $1$-forms that are interpreted as the electric and magnetic relativistic currents. These $1$-forms include: $(c\rho,\mathbf{J})$, where its space components, $\mathbf{J}$, are the vector current and the time component is:  $c\rho$, where $\rho$ the volumetric density of charge. These equations are dimensionally consistent and the negative sign in the second equation is required to meet some compatibility criteria that must be explained later. From their definitions and the  property of double coderivative, $\delta\delta=0$, is verified that: $\delta J^{(E)} = \delta J^{(M)} = 0$. 

The equations can be rewritten  based on the double Hodge duality: $\hodgeop(\delta F)= d\hodgeop F$ and $\hodgeop(\delta \hodgeop F)= - d F$, that is symmetric for $(1,3)$-forms and is antisymmetric for $2$-forms. 
\begin{equation}\label{eq:maxwelldual2}
d\hodgeop F =  \mu_{0} (\hodgeop J^{(E)}) \quad dF =  {\mu_{0}} (\hodgeop J^{(M)})
\end{equation}

The vector expressions in $3$-vectors are the following
for the first Maxwell equation~\cite
{Baez:1994}: 
\begin{equation}\label{eq:maxwell1}
\nabla\cdot\mathbf{E}/c = \mu_{0} c\;\rho^{(E)} \qquad \nabla\times\mathbf{B}  - \frac{\partial\mathbf{E}/c}{c\partial t}= \mu_{0}\mathbf{J}^{(E)}
\end{equation}

while that second Maxwell equation~\cite
{Jackson:1999}  in $dF$ is:
\begin{equation}\label{eq:maxwell2}
-\nabla\cdot\mathbf{B} = -\mu_{0}c\rho^{(M)} \qquad \nabla\times\mathbf{E}/c +  \frac{\partial\mathbf{B}}{c\partial t} = -\mu_{0}\;\mathbf{J}^{(M)} 
\end{equation}

The integrals in a $2$-dimensional submanifold $\partial\Omega$   have the following meaning based on Gauss Theorem.
\begin{eqnarray}
\int_{\partial\Omega} \hodgeop F 
&= & \frac{1}{c} \int_{\partial\Omega} \mathbf{E} \cdot d\mathbf{s} + \cdots = \mu_{0}c\;Q^{(E)}
\end{eqnarray}

\begin{eqnarray}
\int_{\partial\Omega}  F''
&= & \int_{\partial\Omega} \mathbf{B} \cdot d\mathbf{s} + \cdots = \mu_{0}c\;Q^{(M)}
\end{eqnarray}

More compactly, the result can be achieved but using the Stokes Theorem: 
\begin{eqnarray}\label{eq:chargesOrd}
\int_{\partial\Omega} F &=& \int_{\Omega} dF = {\mu_{0}}\int_{\Omega} \hodgeop J^{(M)} = {\mu_{0}}c  \;Q^{(M)}\\ 
\int_{\partial\Omega} \hodgeop F &=& \int_{\Omega} d\hodgeop F = \mu_{0} \int_{\Omega} \hodgeop J^{(E)} = \mu_{0}c\;Q^{(E)}
\end{eqnarray} 

The criteria to choose the signs in Equation \eqref{eq:maxwelldual} is to provide positive value for the two previous integrals with meaning of charge. Based on this results, the cohomology integrals must be interpreted as:
\begin{equation}\label{eq:chargesTop}
\int_{z^{(2)}_{a}} F =  {\mu_{0}}c\; q^{(M)}_{a} \qquad \int_{z^{(2)}_{a}} \hodgeop F =  \mu_{0}c\; q^{(E)}_{a} 
\end{equation} 

where $q^{(M)}_{a} $ and  $q^{(E)}_{a} $ have the meaning  of magnetic and electric charges respectively associated to the $2$-cycles. However, in this context charges do not mean charged virtual~(point-like) particles.  These charges are generated by the topology properties and are specific to no simple connected manifolds. They are manifold properties and disappear in simple connected manifolds. However, the charges  $Q^{(M)}$ and   $Q^{(E)}$ are common in both simple and no simple connected manifold. Both concepts are formal different and coherently must be physically differentiated in a complete physic theory. Even though if $Q^{(M)}$ and $Q^{E}$ are null, there exists field.

\subsection{Double Potential and Quantization of the Norm}

In Classical Electromagnetic Theory $F$ is an exact $2$-form, $F=dA$, where the Potential $A$ is an $1$-form. However, it is not in a general case as is analysed in this paper. The main consequence is that there is not an equivalent to the $1$-form $A$. The Hodge decomposition of $F$ requires two $1$-forms as have been presented in Section \ref{sec:even}:
\begin{eqnarray}\label{eq:definicion}
F &=& dA^{(E)} - \hodgeop dA^{(M)} + {\mu_{0}}c \sum_{a=1}^{\beta_{2}} q^{(M)}_{a} \gamma^{(2)}_{a} \\
\hodgeop F &=& dA^{(M)} + \hodgeop dA^{(E)} + {\mu_{0}}c \sum_{a=1}^{\beta_{2}} q^{(E)}_{a} \gamma^{(2)}_{a}
\end{eqnarray}

The compacted Equation \eqref{eq:compacta} becomes: 
\begin{eqnarray}\label{eq:compactadaE}
\left [  \begin{array}{r}
F \\
\hodgeop F
\end{array}\right] &=&
\left[\boldsymbol\sigma_{1} d
+ \boldsymbol\sigma_{2} (\hodgeop  d)\right] \left [  \begin{array}{c}
A^{(E)} \\
A^{(M)}
\end{array}\right] \\ &&+ \mu_{0} c
\sum_{a=1}^{\beta_{2}} 
\left [  \begin{array}{r}
q^{(M)}_{a} \\
q^{(E)}_{a}
\end{array}\right]
\gamma^{(2)}_{a}
\end{eqnarray}

The continuous sources of $F$ are~\cite
{Misner:1973}: 
\begin{eqnarray}
\delta F &=& \delta dA^{(E)}  = \Laplace  A^{(E)} =  \mu_{0}J^{(E)} \\ \delta\hodgeop F &=& \delta dA^{(M)}  = \Laplace A^{(M)} =   - {\mu_{0}} J^{(M)} 
\end{eqnarray}

It must be remarked that conversely to Classical Electromagnetism, the sources $\delta F$ and $\delta\hodgeop F$ are not sufficient enough to determine the field $F$. It is required the additional discrete sources associated to the cohomology classes, that is, the topological charges. These charges $q^{(M)}_{a}$ and $q^{(E)}_{a}$ are not independent; they are related by the result of Proposition \ref{prop:bothintegrals} concerning both type of topology integrals: 
\begin{eqnarray}
q^{(M)}_{a}  &=&  - \sum_{b=1}^{\beta_{2}}  \tau^{(2)}_{ba} q^{(E)}_{b}\\
q^{(E)}_{a}  &=& \sum_{b=1}^{\beta_{2}} \tau^{(2)}_{ba} q^{(M)}_{b}
\end{eqnarray}

This Equations can be also rewritten as:
\begin{equation}\label{eq:compactadaCarga}
\left [  
\begin{array}{c}
q^{(M)}_{a} \\
q^{(E)}_{a}
\end{array}
\right]=
\sum_{b=1}^{\beta_{2}} 
\tau_{ba}
\left[  
\begin{array}{cc}
0 & -1 \\
1 & 0
\end{array}
\right] 
\left [  \begin{array}{c}
q^{(M)}_{b} \\
q^{(E)}_{b}
\end{array}
\right]
\end{equation}

The electric and magnetic charges of the manifold are not individual properties of each class of cohomology, but appear to be collective properties, so their values are such that they verify certain properties in the set of all classes.
If the compact we introduce the complex charge in the plane (M)-(E) as: $\mathbf{q}_{a}=q^{(M)}_{a}+\imath q^{(E)}_{a}$, then the right side is the rotation: $e^{\imath\pi/2}$. The previous Equation becomes in complex representation:
\begin{equation}\label{eq:compactadaCargaCompleja}
\mathbf{q}_{a} =
\imath\sum_{b=1}^{\beta_{2}} 
\tau_{ba} 
\mathbf{q}_{b} 
\end{equation}


Based on Proposition \ref{prop:norm_m}, the norm of $F$ can be decomposed in two continuous terms, that are generalization of the Classical Theory term $(A,J)$ and one discrete/quantized term with $\beta_{2}$ values:
\begin{eqnarray}\label{eq:energia}
(F,F) &=&  \mu_{0}(A^{(E)},J^{(E)}) +  {\mu_{0}} (A^{(M)},J^{(M)}) \\ &&+ (\mu_{0}c)^{2} \sum_{a=1}^{\beta_{2}}\varepsilon^{(2)}_{a,P(a)}  q^{(M)}_{a}  q^{(E)}_{P(a)}
\end{eqnarray}

\subsection{Quantization of the Action}

The electromagnetic action, $S$, in the Lagrangian formalism is:
\begin{equation}\label{eq:action8}
S = \frac{1}{c}\int_{M}\mathfrak{L} =  \frac{1}{c}\int_{M}\mathcal{ L}  \:d\Omega 
\end{equation}

where $\mathcal{ L}$ is the Lagrangian density and $\mathfrak{L}$ is its equivalent $4$-form. If the Maxwell equations must be generated from the variational principle over the action $S$, then the variations must be on each one of the $1$-forms  $A^{(E)}$ and $A^{(M)}$. In Classical Theory, only one of the two Maxwell equations set (the related to sources: electric currents and charges) is generated from the variational principle. The other is formal properties, they are a consequence of the mathematical framework: if $F$ is exact: $F=dA$, then $dF = ddA=0$, is a consequence of the mathematical representation framework; they do not come from any extremal principle.  However, if the two Maxwell equations are related to sources, as is the case of a general analysis of the field, both are required to be generated from the variational principle.   

The Lagrangian must be function of the field variables and its first derivatives, that is: $\mathcal{L}(A^{(E)},A^{(M)},F)$. The Lagrangian density $\mathcal{L}$  in tensor expression is the following, a generalization of the classic one: 
\begin{equation}\label{eq:action5}
\mathcal{L} =  - \frac{1}{4\mu_{0}} F^{ab}F_{ab} - A^{(E)}_{a}(J^{(E)})^{a} -  A^{(M)}_{a} (J^{(M)})^{a} 
\end{equation}

Based on the relationship between tensors,   $2$-form and  $1$-forms,
the Lagrangian  $4$-form $\mathfrak{L}$ must be:
\begin{equation}\label{eq:action2}
\mathfrak{L} =  - \frac{1}{\mu_{0}}
F\wedge\hodgeop F - A^{(E)}\wedge\hodgeop J^{(E)} -    A^{(M)}\wedge\hodgeop J^{(M)} 
\end{equation}

Therefore, the Action becomes: 
\begin{equation}
S = -\frac{1}{\mu_{0}c} (F,F) - \frac{1}{c}(A^{(E)}, J^{(E)}) - \frac{1}{c}(A^{(M)},J^{(M)})
\end{equation}

Based on Equation~\eqref{eq:energia}, the result is the following  Action with two continuous and one, $S_{d}$, discrete/quantized term. The amount of discrete terms is just the number of cohomology cycles, the Betti number $\beta_{2}$:
\begin{eqnarray}\label{eq:action3}
S &=&  - \frac{2}{c}(A^{(E)}, J^{(E)}) - \frac{2}{c}(A^{(M)},J^{(M)})\\ &&- {\mu_{0}c} \sum_{a=1}^{\beta_{2}}\varepsilon^{(2)}_{a,P(a)}  q^{(M)}_{a}  q^{(E)}_{P(a)} 
\end{eqnarray}

\subsection{Charges and Action with $\beta_{2}=2$}

It is verified that:  $n=2m$ even dimensional, $m=2$ is even and $s$ is odd, and $D(2)$ is odd; the matrix $\mathbf{E}^{(2)}$ is symmetric.  Hence, this is the solution S2.1 in Table \ref{tb:soluciones}.   

Thi solution with a pair of \poincare dual cohomologies has symmetric matrix $\mathbf{E}^{(2)}$ and verify the condition: $\lambda^{(2)}_{1}\lambda^{(2)}_{2}= - (\varepsilon^{(2)}_{12})^{2}$.
Thus, the norms $\lambda^{(2)}_{1}$ and $\lambda^{2}_{2}$ are in opposite sign, also it must be an orthogonal case with $\lambda^{(2)}_{12}=0$. The matrices are:
\begin{eqnarray}
\mathbf{E}^{(2)} &=&
\left(
\begin{array}{cc}
0 & \varepsilon^{(2)}_{12}\\
\varepsilon^{(2)}_{12} & 0
\end{array}
\right)
\\
\mathbf{T}^{(2)} &=&
\left(
\begin{array}{cc}
0 & \lambda^{(2)}_{1}/ \varepsilon^{(2)}_{12}\\
\lambda^{(2)}_{2}/ \varepsilon^{(2)}_{12} & 0 
\end{array}
\right)
\\
\mathbf{\Lambda}^{(2)} &=&
\left(
\begin{array}{cc}
\lambda^{(2)}_{1} & 0\\
0 & \lambda^{(2)}_{2}
\end{array}
\right) 
\end{eqnarray}

The relationship between magnetic and electric charges is in compact representation in Equation \eqref{eq:compactadaCarga} is:
\begin{equation}\label{eq:charges}
\left [  
\begin{array}{c}
q^{(M)}_{1} \\
q^{(E)}_{1}
\end{array}
\right]=
\frac{\lambda^{(2)}_{2}}{\varepsilon^{(2)}_{12}}
\left[  
\begin{array}{cc}
0 & -1 \\
1 & 0
\end{array}
\right] 
\left [  \begin{array}{c}
q^{(M)}_{2} \\
q^{(E)}_{2}
\end{array}
\right]
\end{equation}

that show that the charge relationship is composed of a scale factor and a rotation of $\pi/2$ in the plane $(M)-(E)$, such as the both charges are in quadrature. Due to $\lambda^{(2)}_{1}$ and $\lambda^{(2)}_{2}$ have different sign, there is a combination of net charge, or monopole, and differential charge, or dipole,  for both electric and magnetic components.  The monopole or net charge is the sum of charges: $m=q_{1}+q_{2}$, while the dipole moment is proportional to the differential charge: $d=q_{1}-q_{2}$. By using the following expression for each charges from the monopole and dipole factors:
\begin{equation}
q_{1} = \frac{1}{2} (m+d) \qquad q_{2} = \frac{1}{2} (m-d)
\end{equation}
from Equation \eqref{eq:charges},  
t is concluded the following mixture of magnetic and electric monopoles and dipoles:
\begin{equation}
(m^{(M)})^{2} - (d^{(M)})^{2} + (m^{(E)})^{2} - (d^{(E)})^{2} =0
\end{equation}
The discrete/quantized action, $S_{d}$,  expressed from both type of charges becomes:
\begin{eqnarray}\label{eq:action4}
S_{d} & = &  - {\mu_{0}c} \sum_{a=1}^{\beta_{2}}\varepsilon^{(2)}_{a,P(a)}  q^{(M)}_{a}  q^{(E)}_{P(a)} 
\\&=&  {\mu_{0}c} \left( \lambda^{(2)}_{1} (q^{(E)}_{1})^{2} + \lambda^{(2)}_{2} (q^{(E)}_{2} )^{2}  \right) 
\end{eqnarray}

The $\lambda$ factors have a different sign, therefore the Action has not defined sign. Even though there are charge configurations with null discrete Norm and Action. Although from a quantum theory viewpoint, have been argued~\cite{Baez:1994,Dirac:1931,Olive:1996}  that the product of electric and magnetic elementary charges is in the order of Plank constant; it must be considered that this depends on the used physical unit systems; the Gaussian one and $c=1$ are usually used for such comparative.  Equation \eqref{eq:action4} is consistent with such argument, in the relationship between the charge product and the action, but in SI physical units. Also, a topological magnitude is included: $\varepsilon^{(2)}_{12}$. 
The  physical magnitudes involved in the discrete Action are:
\begin{equation}\label{eq:magnitudes}
[h] = [\mu_{0}] [c] [e]^{2} [\lambda] \qquad  [\lambda] = \frac{[h]}{[\mu_{0}][c][e]^{2}} 
\end{equation}

The $\lambda$ magnitudes are physical no dimensional. If the values are in the order of the electron charge and one Plank quantum action, then  the $\lambda$ values are in the order  of the inverse of the fine structure constant $\alpha$ defined as:
\begin{equation}\label{eq:estructurafina}
\alpha = \frac{\mu_{0}c\:e^{2}}{2h}
\end{equation}

Thus, the value for $\lambda$ are in the order of the inverse of this constant:
\begin{equation}\label{eq:magnitude2s}
\lambda \simeq \frac{h}{\mu_{0}c[e]^{2}}  = \frac{1}{2\alpha}
\end{equation}

This result that for an hypothetical study in elementary particles and cohomologies, the following Equation can be an useful starting point, where actions becomes in the order of Plank constant and electric charges are in the order of the electron charge:
\begin{equation}\label{eq:fin1}
\int_{M} \gamma^{(2)}_{a}\wedge\hodgeop \gamma^{(2)}_{a} = \frac{\mathrm{cte}}{\alpha} 
\end{equation}

\section{Conclusions}\label{sec:conclusion}

Hodge decomposition has been presented as a practical application based on linear independent operators: the derivative, coderivative and cohomology integrals. They generate a  differential form decomposition in exact, dual exact and cohomology expansion, that have been proved to be complete.

That methodology is not norm dependent as most of presented in the scientific literature concerning Geometry and Topology. In Riemann manifolds the norm is positive definite and such property is extensively used in many proofs, but the more interesting for Physical applications are the pseudo-Riemann manifolds without that property. Thus, a methodology not founded on the norm signature is more useful to be applied in both Riemann and pseudo-Riemann manifolds.

A set of representative forms of the cohomologies has been used to define the \poincare duality. Some auxiliary matrices to characterize the relationship between Hodge and \poincare dualities of the representative forms have been proposed. 

The decomposition of differential forms in canonical terms requires a topology term obtained from the integral in the cohomology cycles. This decomposition has a counterpart in the decomposition of the norm in also canonical terms. The term concerning the cohomologies is a discrete/quantized finite sum whose number is just the Betti number.  The special case of even-dimensional manifolds has been analyzed including examples of representative forms and the auxiliary matrices. 

An application for the Electromagnetic Field in no simple connected manifold has been presented. A phenomenological interpretation of the cohomology integrals is needed. It is concluded that electric and magnetic charges must be required to correct interpretation of these cohomology integrals. That means that Electromagnetic Duality is necessary for the study of Electromagnetism in no simple connected manifolds.   However, the electric and magnetic charges of the manifold are neither free particles nor individual properties of each cohomology class, but appear to be collective properties of the manifold. It not possible to have electric monopoles without magnetic ones and vice versa. 

A significant result is the presence of one discrete/quantized term in the Field Norm and Action. The amount of such discrete values is the Betti number or \poincare pairs of cohomology classes. \emph{ A Classic Theory approach has been presented, but as result, the Action of Electromagnetic Field includes one quantized term}.  Even though if no continuous sources are present, there is a topology generated Electromagnetic Field whose Action is quantized.
The relation between magnetic monopoles and quantization of the action in the Electromagnetic Fields, widely suggested in previous theoretical studies in Quantum Physics, where the product of electric and magnetic charge is in the order of Plank constant, it is again confirmed but from a Classical Physics approach.

A well-known problem is Classical Electromagnetism happens when the integral of the field norm is extended to all the space-time; in this case, one infinite solution appears. It is an outcome of the point-like model of particles or the lack of a model for finite particles. In the proposed model of Electromagnetism coming from Topology, and in most of wormholes theories family, these integral in all the manifold do not generate infinite values, instead, there is a finite sum. The sources of these properties are cohomologies. 

\bibliographystyle{aipnum4-1}
\bibliography{bibliografiabasica}

\begin{thebibliography}{17}%
\makeatletter
\providecommand \@ifxundefined [1]{%
 \@ifx{#1\undefined}
}%
\providecommand \@ifnum [1]{%
 \ifnum #1\expandafter \@firstoftwo
 \else \expandafter \@secondoftwo
 \fi
}%
\providecommand \@ifx [1]{%
 \ifx #1\expandafter \@firstoftwo
 \else \expandafter \@secondoftwo
 \fi
}%
\providecommand \natexlab [1]{#1}%
\providecommand \enquote  [1]{``#1''}%
\providecommand \bibnamefont  [1]{#1}%
\providecommand \bibfnamefont [1]{#1}%
\providecommand \citenamefont [1]{#1}%
\providecommand \href@noop [0]{\@secondoftwo}%
\providecommand \href [0]{\begingroup \@sanitize@url \@href}%
\providecommand \@href[1]{\@@startlink{#1}\@@href}%
\providecommand \@@href[1]{\endgroup#1\@@endlink}%
\providecommand \@sanitize@url [0]{\catcode `\\12\catcode `\$12\catcode
  `\&12\catcode `\#12\catcode `\^12\catcode `\_12\catcode `\%12\relax}%
\providecommand \@@startlink[1]{}%
\providecommand \@@endlink[0]{}%
\providecommand \url  [0]{\begingroup\@sanitize@url \@url }%
\providecommand \@url [1]{\endgroup\@href {#1}{\urlprefix }}%
\providecommand \urlprefix  [0]{URL }%
\providecommand \Eprint [0]{\href }%
\providecommand \doibase [0]{http://dx.doi.org/}%
\providecommand \selectlanguage [0]{\@gobble}%
\providecommand \bibinfo  [0]{\@secondoftwo}%
\providecommand \bibfield  [0]{\@secondoftwo}%
\providecommand \translation [1]{[#1]}%
\providecommand \BibitemOpen [0]{}%
\providecommand \bibitemStop [0]{}%
\providecommand \bibitemNoStop [0]{.\EOS\space}%
\providecommand \EOS [0]{\spacefactor3000\relax}%
\providecommand \BibitemShut  [1]{\csname bibitem#1\endcsname}%
\let\auto@bib@innerbib\@empty
\bibitem [{\citenamefont {Flanders}(2012)}]{Flanders:2012}%
  \BibitemOpen
  \bibfield  {author} {\bibinfo {author} {\bibfnamefont {H.}~\bibnamefont
  {Flanders}},\ }\href@noop {} {\emph {\bibinfo {title} {Differential Forms
  with Applications to the Physical Sciences}}}\ (\bibinfo  {publisher} {Dover
  Publications},\ \bibinfo {year} {2012})\BibitemShut {NoStop}%
\bibitem [{\citenamefont {Morita}(2001)}]{Morita:2001}%
  \BibitemOpen
  \bibfield  {author} {\bibinfo {author} {\bibfnamefont {S.}~\bibnamefont
  {Morita}},\ }\href@noop {} {\emph {\bibinfo {title} {Geometry of Differential
  Forms}}}\ (\bibinfo  {publisher} {American Mathematical Society},\ \bibinfo
  {year} {2001})\BibitemShut {NoStop}%
\bibitem [{\citenamefont {G\"{o}ckeler}\ and\ \citenamefont
  {Sch{\"u}cker}(1989)}]{Gockeler:1989}%
  \BibitemOpen
  \bibfield  {author} {\bibinfo {author} {\bibfnamefont {M.}~\bibnamefont
  {G\"{o}ckeler}}\ and\ \bibinfo {author} {\bibfnamefont {T.}~\bibnamefont
  {Sch{\"u}cker}},\ }\href@noop {} {\emph {\bibinfo {title} {Differential
  Geometry, Gauge Theories, and Gravity}}}\ (\bibinfo  {publisher} {Cambridge
  University Press},\ \bibinfo {year} {1989})\BibitemShut {NoStop}%
\bibitem [{\citenamefont {Jost}(2011)}]{Jost:2011}%
  \BibitemOpen
  \bibfield  {author} {\bibinfo {author} {\bibfnamefont {J.}~\bibnamefont
  {Jost}},\ }\href@noop {} {\emph {\bibinfo {title} {Riemannian Geometry and
  Geometric Analysis}}}\ (\bibinfo  {publisher} {Springer},\ \bibinfo {year}
  {2011})\BibitemShut {NoStop}%
\bibitem [{\citenamefont {Hatcher}(2002)}]{Hatcher:2002}%
  \BibitemOpen
  \bibfield  {author} {\bibinfo {author} {\bibfnamefont {A.}~\bibnamefont
  {Hatcher}},\ }\href@noop {} {\emph {\bibinfo {title} {Algebraic Topology}}}\
  (\bibinfo  {publisher} {Cambridge University Press},\ \bibinfo {year}
  {2002})\BibitemShut {NoStop}%
\bibitem [{\citenamefont {Bott}\ and\ \citenamefont {Tu}(1982)}]{Bott:1982}%
  \BibitemOpen
  \bibfield  {author} {\bibinfo {author} {\bibfnamefont {R.}~\bibnamefont
  {Bott}}\ and\ \bibinfo {author} {\bibfnamefont {L.}~\bibnamefont {Tu}},\
  }\href@noop {} {\emph {\bibinfo {title} {Differential Forms in Algebraic
  Topology}}}\ (\bibinfo  {publisher} {Springer},\ \bibinfo {year}
  {1982})\BibitemShut {NoStop}%
\bibitem [{\citenamefont {{Misner}}\ and\ \citenamefont
  {{Wheeler}}(1957)}]{Misner:1957}%
  \BibitemOpen
  \bibfield  {author} {\bibinfo {author} {\bibfnamefont {C.~W.}\ \bibnamefont
  {{Misner}}}\ and\ \bibinfo {author} {\bibfnamefont {J.~A.}\ \bibnamefont
  {{Wheeler}}},\ }\href {\doibase 10.1016/0003-4916(57)90049-0} {\bibfield
  {journal} {\bibinfo  {journal} {Annals of Physics}\ }\textbf {\bibinfo
  {volume} {2}},\ \bibinfo {pages} {525} (\bibinfo {year} {1957})}\BibitemShut
  {NoStop}%
\bibitem [{\citenamefont {Bishop}\ and\ \citenamefont
  {Goldberg}(1968)}]{Bishop:1968}%
  \BibitemOpen
  \bibfield  {author} {\bibinfo {author} {\bibfnamefont {R.}~\bibnamefont
  {Bishop}}\ and\ \bibinfo {author} {\bibfnamefont {S.}~\bibnamefont
  {Goldberg}},\ }\href@noop {} {\emph {\bibinfo {title} {Tensor Analysis on
  Manifolds}}}\ (\bibinfo  {publisher} {Dover Publications},\ \bibinfo {year}
  {1968})\BibitemShut {NoStop}%
\bibitem [{\citenamefont {Friedlander}(1975)}]{Friedlander:1975}%
  \BibitemOpen
  \bibfield  {author} {\bibinfo {author} {\bibfnamefont {F.}~\bibnamefont
  {Friedlander}},\ }\href@noop {} {\emph {\bibinfo {title} {The Wave Equation
  on a Curved Space-time}}}\ (\bibinfo  {publisher} {Cambridge University
  Press},\ \bibinfo {year} {1975})\BibitemShut {NoStop}%
\bibitem [{\citenamefont {Gorbatsevich}, \citenamefont {Onishchik},\ and\
  \citenamefont {Vinberg}(1997)}]{Gorbatsevich:1997}%
  \BibitemOpen
  \bibfield  {author} {\bibinfo {author} {\bibfnamefont {V.}~\bibnamefont
  {Gorbatsevich}}, \bibinfo {author} {\bibfnamefont {A.~L.}\ \bibnamefont
  {Onishchik}}, \ and\ \bibinfo {author} {\bibfnamefont {E.~B.}\ \bibnamefont
  {Vinberg}},\ }\href@noop {} {\emph {\bibinfo {title} {Foundations of Lie
  Theory and Lie Transformations Groups}}}\ (\bibinfo  {publisher} {Springer},\
  \bibinfo {year} {1997})\BibitemShut {NoStop}%
\bibitem [{\citenamefont {Misner}, \citenamefont {Thorne},\ and\ \citenamefont
  {Wheeler}(1973)}]{Misner:1973}%
  \BibitemOpen
  \bibfield  {author} {\bibinfo {author} {\bibfnamefont {C.}~\bibnamefont
  {Misner}}, \bibinfo {author} {\bibfnamefont {K.}~\bibnamefont {Thorne}}, \
  and\ \bibinfo {author} {\bibfnamefont {J.}~\bibnamefont {Wheeler}},\
  }\href@noop {} {\emph {\bibinfo {title} {Gravitation}}}\ (\bibinfo
  {publisher} {W. H. Freeman},\ \bibinfo {year} {1973})\BibitemShut {NoStop}%
\bibitem [{\citenamefont {Hehl}\ and\ \citenamefont
  {Obukhov}(2003)}]{Hehl:2003}%
  \BibitemOpen
  \bibfield  {author} {\bibinfo {author} {\bibfnamefont {F.}~\bibnamefont
  {Hehl}}\ and\ \bibinfo {author} {\bibfnamefont {Y.}~\bibnamefont {Obukhov}},\
  }\href@noop {} {\emph {\bibinfo {title} {Foundations of Classical
  Electrodynamics: Charge, Flux, and Metric}}}\ (\bibinfo  {publisher}
  {Birkh{\"a}user Boston},\ \bibinfo {year} {2003})\BibitemShut {NoStop}%
\bibitem [{\citenamefont {Landau}\ and\ \citenamefont
  {Lifshitz}(1973)}]{Landau:1973}%
  \BibitemOpen
  \bibfield  {author} {\bibinfo {author} {\bibfnamefont {L.~D.}\ \bibnamefont
  {Landau}}\ and\ \bibinfo {author} {\bibfnamefont {E.}~\bibnamefont
  {Lifshitz}},\ }\href@noop {} {\emph {\bibinfo {title} {The Classical Theory
  of Fields}}},\ \bibinfo {edition} {fourth ed.}\ ed.\ (\bibinfo  {publisher}
  {Butterworth-Heinemann},\ \bibinfo {year} {1973})\BibitemShut {NoStop}%
\bibitem [{\citenamefont {Baez}\ and\ \citenamefont
  {Muniain}(1994)}]{Baez:1994}%
  \BibitemOpen
  \bibfield  {author} {\bibinfo {author} {\bibfnamefont {J.}~\bibnamefont
  {Baez}}\ and\ \bibinfo {author} {\bibfnamefont {J.}~\bibnamefont {Muniain}},\
  }\href@noop {} {\emph {\bibinfo {title} {Gauge Fields, Knots, and Gravity}}}\
  (\bibinfo  {publisher} {World Scientific},\ \bibinfo {year}
  {1994})\BibitemShut {NoStop}%
\bibitem [{\citenamefont {Jackson}(1999)}]{Jackson:1999}%
  \BibitemOpen
  \bibfield  {author} {\bibinfo {author} {\bibfnamefont {J.~D.}\ \bibnamefont
  {Jackson}},\ }\href@noop {} {\emph {\bibinfo {title} {Classical
  electrodynamics}}},\ \bibinfo {edition} {3rd}\ ed.\ (\bibinfo  {publisher}
  {Wiley},\ \bibinfo {address} {New York, {NY}},\ \bibinfo {year}
  {1999})\BibitemShut {NoStop}%
\bibitem [{\citenamefont {Olive}(1996)}]{Olive:1996}%
  \BibitemOpen
  \bibfield  {author} {\bibinfo {author} {\bibfnamefont {D.~I.}\ \bibnamefont
  {Olive}},\ }\href {\doibase http://dx.doi.org/10.1016/0920-5632(96)00002-3}
  {\bibfield  {journal} {\bibinfo  {journal} {Nuclear Physics B - Proceedings
  Supplements}\ }\textbf {\bibinfo {volume} {46}},\ \bibinfo {pages} {1 }
  (\bibinfo {year} {1996})}\BibitemShut {NoStop}%
\bibitem [{\citenamefont {Dirac}(1931)}]{Dirac:1931}%
  \BibitemOpen
  \bibfield  {author} {\bibinfo {author} {\bibfnamefont {P.~A.~M.}\
  \bibnamefont {Dirac}},\ }\href {\doibase 10.1098/rspa.1931.0130} {\bibfield
  {journal} {\bibinfo  {journal} {Proceedings of the Royal Society of London A:
  Mathematical, Physical and Engineering Sciences}\ }\textbf {\bibinfo {volume}
  {133}},\ \bibinfo {pages} {60} (\bibinfo {year} {1931})},\ \Eprint
  {http://arxiv.org/abs/http://rspa.royalsocietypublishing.org/content/133/821/60.full.pdf}
  {http://rspa.royalsocietypublishing.org/content/133/821/60.full.pdf}
  \BibitemShut {NoStop}%
\end{thebibliography}%

\end{document}